\newtheorem{theorem}{Theorem}[section]
\newtheorem{lemma}[theorem]{Lemma}
\newcommand{\rd}{{\rm d}}
\newcommand{\be}{\begin{equation}}
\newcommand{\ee}{\end{equation}}
\newcommand{\bey}{\begin{eqnarray}}
\newcommand{\eey}{\end{eqnarray}}
\newcommand{\eps}{\varepsilon}
\newcommand{\cU}{{\cal U}}
\newcommand{\bR}{{\mathbb R}}
\newcommand{\bN}{{\mathbb N}}
\newcommand{\wt}{\widetilde}
\newcommand{\wh}{\widehat}
\newcommand{\const}{\mathrm{const}}
\newcommand{\cE}{{\cal E}}
\newcommand{\cL}{{\cal L}}
\newcommand{\cO}{{\cal O}}
\newcommand{\supp}{\operatorname{supp}}
\newcommand{\donothing}[1]{}
\begin{document}

\title{Ionization of Atoms by Intense Laser Pulses}

\author{J\"{u}rg Fr\"{o}hlich${}^1$, Alessandro Pizzo${}^2$\thanks{Supported by NSF grant DMS-0905988.}, Benjamin Schlein${}^3$  \\ \\
\small{1. Institute~of~Theoretical Physics, ETH Z\"{u}rich} \\
\small{CH-8093 Z\"{u}rich, Switzerland} \\ \small{juerg@itp.phys.ethz.ch} \\ \\
\small{2. Department of Mathematics, University of California Davis} \\
\small{One Shields Avenue, Davis, California 95616, USA} \\
\small{pizzo@math.ucdavis.edu}
\\ \\
\small{3. Institute for Applied Mathematics, University of Bonn}\\  
\small{Endenicher Allee 60, 53115 Bonn, Germany} \\ \small{b.schlein@dpmms.cam.ac.uk}}

\date{}

\maketitle

{\it Dedicated to our friend and colleague Robert Schrader on the occasion of his $70^{th}$ birthday}

\bigskip

\begin{abstract}
The process of ionization of a hydrogen atom by a short infrared laser pulse is studied in the regime of very large pulse intensity, in the dipole approximation. Let $A$ denote the integral of the electric field of the pulse over time at the location of the atomic nucleus. It is shown that, in the limit where $|A| \to \infty$, the ionization probability approaches unity and the electron is ejected into a cone opening in the direction of $-A$ and of arbitrarily small opening angle. Asymptotics of various physical quantities in $|A|^{-1}$ is studied carefully. Our results are in qualitative agreement with experimental data reported in \cite{1,2}.
\end{abstract}

\section{Experimental Findings and Preliminary Theoretical Considerations}
\label{sec:intro}

In recent experimental work \cite{1}, \cite{2}, P. Eckle et al. have investigated the ionization of Helium atoms by highly intense elliptically polarized infrared laser pulses of short duration. One of the purposes of their work has been to perform an (indirect) measurement of the \emph{tunneling delay time} in strong-field ionization of Helium atoms. The experimental parameters in their work have been chosen as follows: the pulse duration, $T$, is around $5.5\,\, \text{femtoseconds}$; the peak intensity, $I_0$, is between $2.3\times 10^{14}$ and $3.5\times 10^{14}$ watts per square centimeter, and the center wave-length is around $725\,\,{\text{nm}}$. The ionization potential, $I_p$, of a Helium atom in its groundstate is known to be $I_p \approx 24.6\,\,{\text{eV}}$. These parameter values yield a \emph{Keldysh parameter}, $\gamma$, for circularly polarized light ranging from $1.17$ to $1.45$. The Keldysh parameter for circular polarization is given by
\begin{equation}
\gamma \approx  0.33 \sqrt{\frac{I_p(eV)}{I_0(10^{14}W/cm^2)[\cL(\mu m)]^2}}\,.
\end{equation}
If $\gamma \gg 1$, i.e., for short wave lengths, $\cL$, and low intensity, $I_0$, the ionization process can be described in terms of multi-photon absorption, and one may attempt to treat the ionization problem perturbatively; (for a theoretical analysis of a related problem, see, e.g., \cite{3}).

\medskip

If $\gamma \ll 1$, i.e., for high intensities and long wave lenghts, a regime is approached where the electromagnetic field can be treated classically. However, due to the high intensity of the pulse, the theoretical analysis of the ionization process is \emph{intrinsically non-perturbative} in the coupling of the electrons to the electromagnetic field. This is the regime we study in this paper.

\medskip

For the values of $\gamma$ between $1.17$ and $1.45$ realized in the experiments described in \cite{1}, \cite{2}, reliable analytical calculations of the ionization process appear  to be very difficult to come by, and it is advisable to perform numerical studies; see \cite{4}. We find, however, that our analytical results are in good qualitative agreement with the experimental findings in \cite{1}, \cite{2}. One key point of these findings is that the ionization process of a Helium atom by a short, intense near-infrared laser pulse is essentially instantaneous, in contrast to theoretical predictions based on an approximate theoretical picture taken from \cite{5}, \cite{6}: Experimentally, an upper bound on the time it takes to ionize a Helium atom (with experimental parameters chosen as discussed above) appears to lie between $12$ and $34\,\, \text{attoseconds}$, while a theoretical prediction relying on  \cite{5}, \cite{6} yields an ionization (or ``barrier traversal") time of $450-560\,\, \text{attoseconds}$. Obviously there is a problem with either the interpretation of the experimental findings in terms of an ``ionization time" or with the approximate theory of the ionization process based on \cite{5}, \cite{6}; but most likely with both. The purpose of our paper is to provide a qualitative theoretical interpretation of the data gathered in the experiments described in \cite{1}, \cite{2}.

\medskip

We start with a brief sketch of the picture on which the theoretical interpretation of the experimental results is based that the authors of \cite{1} have advocated implicitly. We then describe our own approach and state our main results.

\medskip

Without harm, we may simplify our discussion by considering the ionization of Hydrogen atoms or Helium$^{+}$ ions by elliptically polarized laser pulses. The direction of propagation of the pulses through a very dilute, cold gas of atoms or ions is chosen to be our $z$-axis. The electric and magnetic field of the pulse are then parallel to the $x-y$ plane. If $\cE_{0}$ denotes the peak electric field of the pulse at the location of an atom or ion and $T$ denotes the duration of the pulse then the field of the pulse is assumed to be homogeneous over a region of the $x-y$ plane of large diameter, $d$, as compared to $\cE_{0} T^2$, centered at the location of the atom or ion. Note that $\cE_{0} T^2$ has the dimension of length. This assumption partially justifies to use the dipole approximation. 

\medskip

The Hamiltonian generating the time evolution of the electron in the atom or ion then only depends on the electric field, $E(t)$, at the location of the atomic (or ionic) nucleus; ($t$ denotes time). The vector $E(t)$ can be chosen to have the form
\begin{equation}\label{eq:2}
E(t)=\cE_0 (t) \left( \cos \left[\omega\left(t-\frac{T}{2}\right)\right], \epsilon \sin\left[\omega \left(t-\frac{T}{2}\right)\right], 0\right)\,
\end{equation}
where $\cE_0 (t)$ is a smooth envelope function with support in the interval $[0,T]$, $\omega=2\pi c/\cL$ is the angular frequency of the pulse (with $\cL \ll cT$), and $\epsilon$ is a parameter describing the elliptical polarization of the pulse. To be concrete, we choose $\cE_0(t)$ to be non-negative, symmetric-decreasing about $t=T/2$, with a maximum, $\cE_0(T/2)=:\cE_0$, at $t=T/2$.

\medskip

Apparently, the pulse arrives at the location of the nucleus at time $t=0$ and lasts until time $t=T$. An important quantity is the vector potential
\begin{equation}\label{eq:3}
A(t)=\int_0^{t}d\tau \,E(\tau)\,.
\end{equation}
Clearly, $A(t)=0$, for $t\leq0$, and $A(t)\equiv A(T)$, for $t\geq T$. For our choice of the envelope function $\cE_0(t)$,
\begin{equation}\label{eq:4}
A(T)=\,\const \cdot \cE_0 \, (1,0,0)\,,
\end{equation}
where the constant depends on $\omega$ and on $\cE_0(t)$; it tends to $0$ rapidly, as $\omega\to \infty$, i.e., in the ultraviolet. In this regime, the Keldysh parameter $\gamma$ becomes very large, and the analysis presented in our paper is not applicable. It does, however, apply to the situation where $\const\, \cdot \cE_0$, in Eq. (\ref{eq:4}), becomes \emph{large}, meaning that $\gamma$ becomes \emph{small}.

\medskip

To anticipate our main result, we will show that, for a laser pulse of the form in  Eq. (\ref{eq:2}),
\begin{itemize}
\item [i)]
the ionization probability approaches unity, as $\cE_0\to \infty$ (with a rate that will be estimated explicitly), and 
\item[ii)]
the electron is ejected by the pulse into a cone with axis parallel to $A(T)$ and a  small opening angle $\Theta=\Theta(\cE_0)$; its average velocity $v=v(\cE_0)$ is approximately parallel to $A(T)$. Moreover,
\begin{equation}\label{eq:5}
\Theta(\cE_0)\to 0\,,\, \text{as}\,\,\cE_0\to \infty\,,
\end{equation}
(with a rate that will be estimated), and 
\begin{equation}\label{eq:6}
v(\cE_0)\parallel A(T) \,,\, \text{as}\,\,\cE_0\to \infty\,,
\end{equation}
with $|v(\cE_0)|\propto \cE_0$.
\end{itemize}

These theoretical results are in good qualitative agreement with the experimental findings described in \cite{1}, \cite{2}. In the experiments, the motion of the \emph{ions} after ionization is measured. However, by momentum conservation, such measurements also determine the motion of the electron.

\medskip

In \cite{1}, data compatible with Eqs. (\ref{eq:5}) and (\ref{eq:6}) are interpreted as saying that the ionization process is nearly instantaneous. This interpretation is based, implicitly, on arguments that rely on the ``Ritz Hamiltonian" for the motion of the electron:
\begin{equation}\label{eq:7}
H_{Ritz}(t)=-\Delta-\frac{Z}{|x|}-E(t)\cdot x\,.
\end{equation}
$\Delta$ is the Laplacian, $Z$ is the charge of the nucleus, and $E(t)$ is the electric field of the laser pulse at the location of the nucleus, (see Eq. (\ref{eq:2})).
Here we work in units such that $\hbar=1$, $m_{el}=1/2$ and $e=1$, where $m_{el}$ is the mass of an electron and $e$ is the elementary electric charge. Therefore, in our units, the numerical value of the speed of light, $c$, is around $137$. Hereafter, we follow the convention that the dimension of a physical quantity is a function  of  the length only, namely: $\text{[length]}=length$; $\text{[mass]}=length^{-1}$; $[\text{time}]=length$;  the electric charge is \emph{dimensionless}.

\medskip

At a fixed moment, $t=t_0$, of time, the potential
\begin{equation}
U_{t_0}(x):=-\frac{Z}{|x|}-E(t_0)\cdot x
\end{equation}
has a shape indicated in Fig. \ref{fig}.

\begin{figure}
\begin{center}
\label{fig}
\includegraphics[height=1.5in]{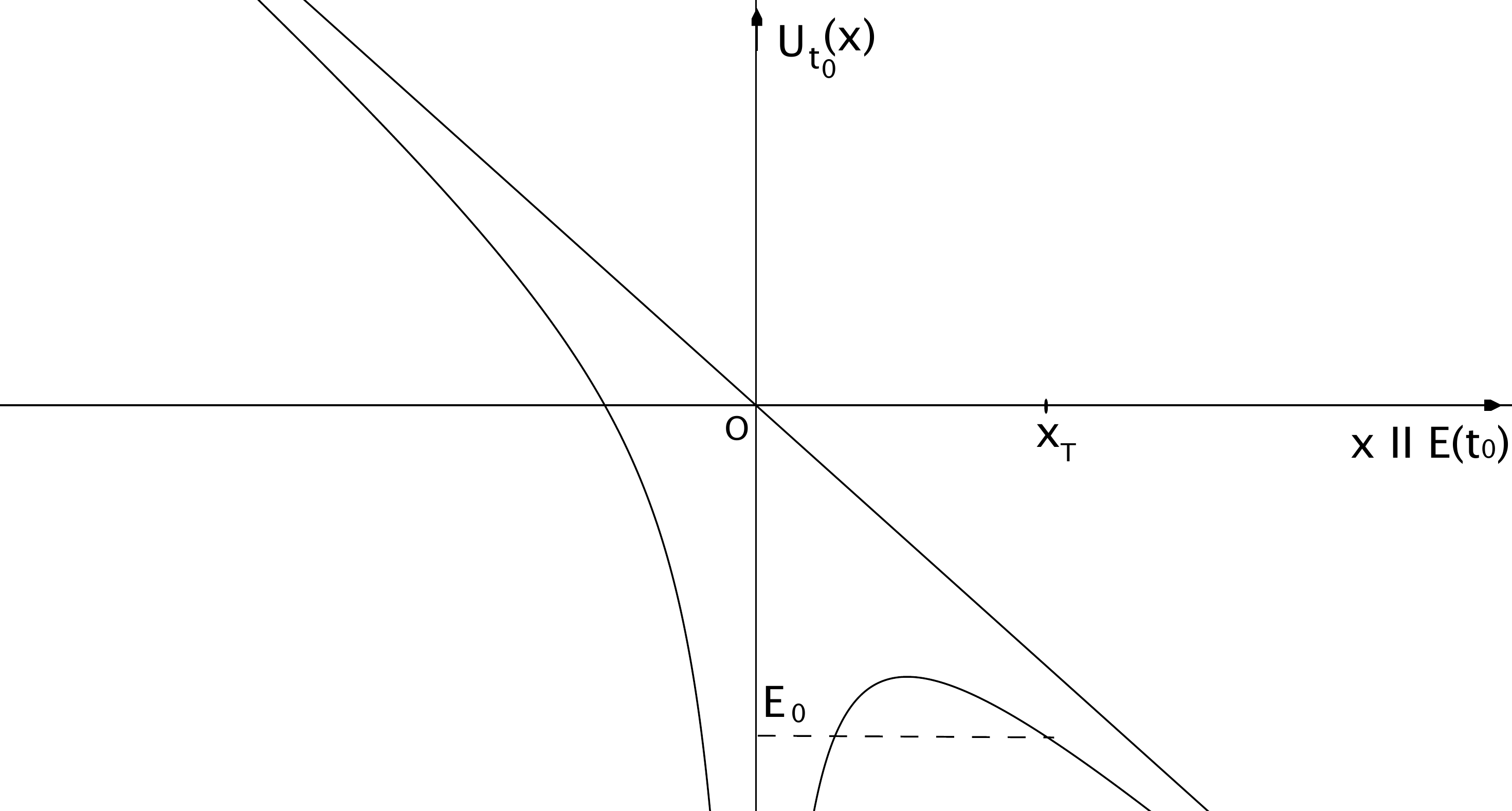}
\caption{The potential $U_{t_0} (x)$.}
\end{center}
\end{figure}

\medskip

Initially, the electron is localized near the nucleus placed at the origin, $O$, of our coordinate system and treated as static for the duration of the tunneling process. If $E(t)$ depends slowly on time $t$, i.e., for rather large pulse duration $T$ and long wave lengths, one may expect that an adiabatic approximation for the description of the tunneling process of the electron through the barrier of the potential $U_{t_0}(x)$ to the point $x_{T}$ (see Fig. \ref{fig}) is appropriate. If $\Delta t_{T}$ denotes the barrier traversal time, the electric field acting on the (nearly free) electron, after it has traversed the barrier, is given by $E(t)$, with $t\geq t_0+\Delta t_{T}$. If we interpret $t_0=0$  as the time of onset of barrier traversal then the electron, after barrier traversal, will be ejected in a direction roughly parallel to the vector
\begin{equation}
X:=\int_{t_0+\Delta t_{T}}^{T}d\tau E(\tau)\,.
\end{equation}
For a pulse described by Eq. (\ref{eq:2}) and a strictly positive barrier traversal time, $\Delta t_{T}$, the direction of $X$ in which the electron is ejected is \emph{not} parallel to the direction of $A(T)$ (parallel to the $x$-axis, for our concrete choice of an envelope function $\cE_{0}(t)$). By tuning the direction of $A(T)$ and measuring the direction in which the electrons are ejected, one can determine the angle, $\phi$, between $X$ and $A(T)$. This angle then provides information on the barrier traversal time $\Delta t_{T}$. Experimentally, $\phi$ is very small, so that $\Delta t_{T}$ is argued to be very short.

\medskip

The analysis presented in this paper shows that, for large $\cE_0$, $\phi$ is small. We have found the Ritz Hamiltonians in Eq. (\ref{eq:7}) to be rather inconvenient for an analysis of ionization processes. It is advantageous to, instead, consider the ``Kramers Hamiltonians"
\begin{equation}\label{eq:8}
H(t)=(p-A(t))^2-\frac{Z}{|x|}\,,
\end{equation}
where $p=-i\nabla$ is the usual electron momentum operator and $A(t)$ is the vector potential at the location of the nucleus given in Eq. (\ref{eq:3}). The evolutions generated by $H_{Ritz}(t)$ (see (\ref{eq:7}))  and $H(t)$, as in (\ref{eq:8}), are related to each other by a time-dependent gauge transformation given by
\begin{equation}\label{eq:9}
\Lambda (x,t):=A(t)\cdot x\,.
\end{equation} 
If $(E(T)\cdot x,0)$ denotes the $4-$vector potential before the gauge transformation (\ref{eq:9}) is made then, after this gauge transformation, it is given by $(0, A(t))$. Quantum-mechanically, the gauge equivalence of the time evolutions generated by the Ritz Hamiltonians, Eq. (\ref{eq:7}), and the Kramers Hamiltonians, Eq. (\ref{eq:8}), can easily be verified using the Trotter product formula (see, e.g., \cite{7}) for the propagators and the identity
\begin{equation}
e^{-i\Lambda(x,t)}H(t)e^{i\Lambda(x,t)}=p^2-\frac{Z}{|x|}\,,
\end{equation}
with $H(t)$ as in (\ref{eq:8}).

\medskip

Next, we sketch some key ideas in our analysis of the time evolution generated by the Kramers Hamiltonians. As an initial condition, $\psi_0$, for the electron we choose a bound state wave function, typically the atomic groundstate. In our units, it has a spatial spread of order $\mathcal{O}(Z^{-1})$. The quantum-mechanical propagator generated by the Kramers Hamiltonians $H(t)$, defined in eq. (\ref{eq:8}), is denoted by $\cU(t,t_0)=\cU(t,t_0;Z)$. It evolves an electronic wave function from time $t_0$ to time $t$ and solves the equation
\begin{equation}\label{eq:10}
i\partial_t \, \cU(t,t_0;Z)=H(t) \, \cU(t, t_0;Z)\,,
\end{equation}
with $\cU(t_0,t_0;Z)=1$, for an arbitrary $t_0$; see \cite{8}. We note that the propagator $\cU_0 (t,t_0)\equiv \cU(t,t_0;Z=0)$ can be calculated explicitly:
\begin{eqnarray}
\cU_0 (t,t_0)&=&\exp{[-i\int_{t_0}^{t}(p-A(\tau))^2d\tau]}\\
& =&e^{i\phi(t,t_0)}e^{-i(t-t_0)p^2}\exp{[2ip\cdot\int_{t_0}^{t}A(\tau)d\tau]}\,.\label{eq:11}
\end{eqnarray}
The first factor on the R. S. of (\ref{eq:11}) is a pure phase factor (with $\phi(t,t_0)=-\int_{t_0}^{t}A(\tau)^2d\tau$), the second factor is the free time evolution, and the third factor is a space translation by the vector $2\int_{t_0}^{t}A(\tau)d\tau$.

\medskip

As our initial time, we choose $t_0=0$, and the initial condition at $t=0$ is chosen to be $\psi_0$, as described above. The laser pulse hits the atom at time $t=0$ and lasts up to time $T$. Because of the space translation,
\begin{equation}\label{eq:12}
T_{\cE_0}:=\exp{[2ip\cdot\int_{0}^{T}A(\tau)d\tau]}\,,
\end{equation}
in the free propagator (\ref{eq:11}), which moves the initial wave function, $\psi_0$, far out of the potential well (described by $-Z/|x|$), provided $\cE_0$ (the peak electric field) is large, one expects that
\begin{equation}\label{eq:13}
\cU(T,0;Z)\psi_0 \approx \cU_{0}(T,0)\psi_0\,,
\end{equation}
with an error term that tends to $0$, as $\cE_0\to\infty$. Results of this type have first been proven by Fring, Kostrykin and Schrader in \cite{FKS}. We will reproduce their results in Sect. 2, below.

\medskip

As noted in (\ref{eq:3}),
\begin{equation}
A(t)=A(T)\, ,\, \text{for}\,\,t\geq T\,,
\end{equation}
i.e., the vector potential is constant when the pulse has passed. We may therefore use a gauge transformation to remove it:
\begin{equation}\label{eq:15}
e^{-i\Lambda(\,.\,,\, T)} \cU (t,T;Z)e^{i\Lambda(\,. \,,\, T)}=\cU_{C} (t,T),\, \text{for all}\,\, t\geq T\,,
\end{equation}
where $\cU_{C}(t,T)=\exp{[-i(t-T)H_{C}]}$, and
\begin{equation}\label{eq:16}
H_{C}:=p^2-\frac{Z}{|x|}
\end{equation}
is the Coulomb Hamiltonian.

\medskip

Next we note that, by Eq. (\ref{eq:9}),
\begin{equation}\label{eq:17}
e^{-i\Lambda(x\,,\, T)}=e^{-iA(T)\cdot x},\,
\end{equation}
i.e., $e^{-i\Lambda(x\,,\, T)}$ is a translation in momentum space: it translates $\widehat{\psi}_{T}(p)$ to
\begin{equation}\label{eq:18}
\widehat{\psi}_{A(T)}(p):=\widehat{\psi}_{T}(p+A(T))\,,
\end{equation}
where
\begin{equation}
\psi_{T}(x)=(\cU(T,0;Z)\psi_0)(x)\,,
\end{equation}
and $\widehat{\psi}_{T}$ is the Fourier transform of $\psi_{T}$. An electron in the state given by $\psi_{A(T)}$, see Eq. (\ref{eq:18}), has a mean distance from the nucleus of order $\mathcal{O}(|\int_{0}^{T}A(\tau)d\tau\,|)$ and a mean velocity in the direction of $A(T)$ of magnitude $|A(T)|$. Thus, the mean distance of $\psi_{A(t)}$ from the nucleus and the mean velocity of the electron, parallel to $A(T)$, diverge, as the peak electric field, $\cE_0$, of the pulse tends to $\infty$. However, by Eqs. (\ref{eq:13}) and (\ref{eq:11}), the spread of the wave function $\psi_{A(t)}$ in $x-$space around its mean position is of order $\mathcal{O}(TZ)$, which is \emph{independent} of $\cE_0$. It is then almost obvious that, for $t\geq T$, 
\begin{eqnarray}
\cU(t,0;Z)\psi_0&=&\cU (t,T;Z)\psi_T \\
&=&e^{i\Lambda(\,.\,,\,T)} \cU_{C}(t,T) \psi_{A(T)}\\
&\approx& e^{i \Lambda(\,.\,,\,T)}e^{-i(t-T)p^2}\psi_{A(T)}\,,\label{eq:19}
\end{eqnarray}
with an error term that tends to $0$, as $\cE_0\to \infty$, \emph{uniformly} in $t\geq T$. This will be proven mathematically in Sect. 2.2., below. The phase factor, $e^{i\Lambda(.,T)}$, on the R.S. of (\ref{eq:19}) is unimportant. Moreover, $\exp{[-i(t-T)p^2]}\psi_{A(T)}$ is the \emph{free} time evolution of an electron wave function initially located at a distance of order $\mathcal{O}(|\int_{0}^{T}A(\tau)d\tau\,|)$ from the nucleus and with a mean velocity parallel to $A(T)$ and of magnitude $|A(T)|$.  Its spread in the direction perpendicular to $A(T)$ is of order $\mathcal{O}(t\,Z)$, which is \emph{independent} of $\cE_0$. Thus, the state $\cU(t,0;Z)\psi_0$ propagates into a cone with axis parallel to $A(T)$ and with an opening angle of order $\mathcal{O}(Z/|A(T)|)$, which tends to $0$, as $\cE_0 \to \infty$.

\medskip

In the technical sections of this paper, these claims are verified mathematically, and the asymptotics in $1/\cE_0$ is estimated quite carefully. This is crucial, because the Kramers Hamiltonians $H(t)$ of Eq. (\ref{eq:8}) do \emph{not} capture the physics of the ionization process correctly for very large values of $\cE_0$, for the following reasons:
\begin{itemize}
\item[(1)]
Non-relativistic kinematics for the electron is justified in our study of the ionization process \emph{only} if the (mean) electron speed after ionization, $|A(T)|$, is \emph{small} compared to the speed of light, $c$, (with $c \approx 137$, in our units). If this condition is violated relativistic kinematics would have to be employed, and electron-positron pair creation by the laser pulse in the Coulomb field of the nucleus would have to be incorporated in our analysis, i.e., the whole process would have to be studied by using methods of relativistic QED.
\item[(2)]
The dipole approximation used in the Hamiltonians defined in Eqs. (\ref{eq:7}) and (\ref{eq:8}) can only be justified under the following conditions:
\begin{itemize}
\item[(i)]
The wave length $\cL$ and the spatial extension, $Tc$, of the laser pulse in the propagation direction (here the $z-$axis) must be large, as compared to the spatial spread in the $z-$direction of the electron wave function at time $t=T$, which is of order  $\cO(TZ)$. It follows right away that $Z\ll 137$, i.e., our analysis only applies to light atoms, such as Hydrogen or Helium, which, of course, was to be expected. Thus, we must impose that
\begin{equation}\label{eq:20}
TZ\ll \cL \ll Tc
\end{equation}
\item[(ii)]
In order to justify neglecting the spatial dependence of the vector potential, $A(x,t)$, of the laser pulse in the \emph{Pauli-Fierz Hamiltonian}
\begin{equation}
H_{PF}(t):=(p-A((x,t))^2-\frac{Z}{|x|}
\end{equation}
that should be used in our analysis, instead of the Kramers Hamiltonian, Eq. (\ref{eq:8}), the laser pulse must be spatially homogenous in the $x-$ and the $y-$directions up to a distance $d$ from the nucleus \emph{large} compared to the mean distance of the electron from the nucleus at time $T$, which is given by $2|\int_0^{T}A(\tau)d\tau|$.
\item[(iii)]
Finally, terms like $|A(x,t)^2-A(0,t)^2|$ should be small in the tales of the electron wave function, $\psi_t$, for all times. These conditions are satisfied if $Z\ll 137$ and if $\cE_0$ is fairly small compared to $Z$; e.g., $Z$ and $\cE_0$ of order $1$.
\end{itemize}
\end{itemize}
Since our analytical methods only yield asymptotics in $1/\cE_0$, we would be lucky if our results gave reliable information about the ionization process for $\cE_0$ of order $1$, (i.e., $\gamma \approx 1$), corresponding to the experimental situation and needed to justify the dipole approximation. More precise quantitative information can presumably only be obtained from extensive numerical simulations.

\medskip

Yet, it is gratifying to note that our results are in good qualitative agreement with the experimental findings. Moreover, our analysis, which is based on the Kramers Hamiltonian in Eq. (\ref{eq:8}), suggests that naive calculations of ``barrier traversal times" based on an adiabatic approximation to the Ritz Hamiltonians, Eq. (\ref{eq:7}), may not yield reliable results.

\bigskip

{\bf Acknowledgements.} We thank Patrissa Eckle and Ursula Keller for explaining their experiments to us and encouraging us to carry out the analysis presented in this paper.

\section{Description of the Theoretical Setup}
\setcounter{equation}{0}

We consider an electron bound to a nucleus by a static potential $V(x)$ and under the influence of a laser pulse described, in the Coulomb gauge, by the time dependent vector potential $A(t)$, which we assume to be independent of $x$. The Hamiltonian is given by
\[ H(t) = (p-A(t))^2 + V(x) \]
and acts on the Hilbert space $L^2 (\bR^3)$. Here $p= -i\nabla$ is the momentum operator. 
We denote by $\cU (t,s)$ the propagator generated by the time-dependent Hamiltonian $H(t)$, that is \begin{equation}\label{eq:ev} i\partial_t \cU (t,s) = H(t) \cU (t,s) , \qquad \text{with } \quad \cU(s,s) = 1 \quad \text{for all } s \in \bR. \end{equation}

\subsection{The Pulse}

We consider a pulse with amplitude $\lambda$ lasting for a time $T >0$. We will be interested here in fixed $T$ and large $\lambda$.  

The electric component of the pulse is given by \[ E(t) = \frac{\lambda}{T} f(t/T) \] for a vector valued function $f: \bR \to \bR^3$, with $\supp \, f \subset [0,1]$. (In Section \ref{sec:intro}, we have used the notation $\cE_0 \simeq \lambda/T$). The vector potential $A(t)$ is then given by \[ A(t) = \int_{-\infty}^t \rd s \, E(s) =  \lambda F(t/T) \]
with \[ F(s) = \int_{-\infty}^s \rd \tau f(\tau) \] By definition $F(s) = 0$, for all $s < 0$, and $F(s) = F(1)$, for all $s \geq 1$. 

The time integral of the vector potential will also play an important role in our analysis. We set \[ G(s) = \int_{-\infty}^s \rd \tau \, F(\tau) \] Then
\[ \int_{-\infty}^t A(s) \rd s = \lambda T G (t/T). \]
By definition $G(s) = 0$, for all $s <0$, and $G(s) = G(1) + (s-1) F(1)$, for all $s >1$. 

\medskip

{\it Assumptions on Pulse.}  We assume that
\begin{equation}\label{eq:ass0}
|G(s)|^{-1} \in L^1 ((s_0,1)), \qquad \text{for all } 0 < s_0 < 1 \, . 
\end{equation}
Moreover, we assume that 
\begin{equation}\label{eq:ass1}
F(1)  \not = 0 
\end{equation}
and that 
\begin{equation}\label{eq:ass2}
|G(s)| \geq C s \, , \qquad \text{for all } s \geq 1\, .
\end{equation}
Assuming that $F(1) \not = 0$, this last condition is satisfied if $F(1) \cdot G(1) \geq 0$;  in other words, if the angle between $F(1)$ and $G(1)$ is less or equal to $\pi$. In fact, for arbitrary $s \geq 1$, 
\[ \begin{split} |G(s)|^2 = & \; |G(1) + (s-1) F(1)|^2 = |G(1)|^2 + (s-1)^2 |F(1)|^2 + 2 (s-1) G(1) \cdot F(1) \\ \geq &\; |G(1)|^2 + (s-1)^2 |F(1)|^2 \geq \frac{\min (|G(1)|^2, |F(1)|^2)}{2} s^2 \, . \end{split} \]

\medskip

{\it Examples.} A simple example of a pulse satisfying the assumptions (\ref{eq:ass1}), (\ref{eq:ass2}) is obtained by setting \[ f (s) = {\bf \eps} \, {\bf 1} (0 \leq s \leq 1) \] for a fixed polarization vector ${\bf \eps} \in \bR^3$ (pulse with linear polarization). Then $F(s) = 0$, for $s \leq 0$, $F(s) = {\bf \eps}\, s$, for $s\in [0,1]$, and $F(s) = {\bf \eps}$ for $s \geq 1$. This gives $G(s) = 0$ for $s\leq 0$, $G(s) =(s^2/2) \, {\bf \eps}$ for $s \in [0,1]$, $G(s) = (s-1/2) \, {\bf \eps}$ 
for $s \geq 1$. Another example is a pulse with modulated circular polarization. If the  polarization is perpendicular to the $z$-axis, such a pulse is described by \[ f(s) = h(s) (\cos (\omega (s- 1/2)), \sin (\omega (s- 1/2)), 0) \,  \] where $h(s) \geq 0$ is symmetric decreasing about $s= 1/2$, with $\supp h \subset [0,1]$. If the effect of the pulse does not average out to zero, it is simple to check that, in this case, too, the conditions (\ref{eq:ass1}) and (\ref{eq:ass2}) are satisfied; see Sect. \ref{sec:intro}.

\subsection{The potential}

To describe the coupling of the electron to the nucleus, we consider a static potential $V(x)$. We distinguish two sets of assumptions on the potential $V$.

\medskip

{\it Short range potential.} We assume that there is a constant $V_0$ (with $[V_0] =  length^{-1}$), a length scale $D>0$, and an $\alpha >0$ such that
\begin{equation}\label{eq:pot-sr}
|V(x)| \leq \frac{V_0 D}{|x|} \, \frac{1}{(1+(x/D)^2)^{\alpha/2}}
\end{equation}

{F}rom the physical point of view, it is important to also cover an attractive Coulomb potential.

\medskip

{\it Coulomb potential.} \begin{equation}\label{eq:pot3} V(x) = - \frac{Z}{|x|}, \quad Z >0  \, . \end{equation}

\subsection{The initial wave function}

We require exponential decay of the wave function $\psi$ and of its first and second derivatives. In other words, we assume that 
\begin{equation}\label{eq:psi1} |\psi (x)| \leq C R^{-3/2} \, e^{-|x|/R}, \quad  |\nabla \psi (x)| \leq C R^{-5/2} e^{-|x|/R}, \quad |\Delta \psi (x)| \leq C R^{-5/2} \, e^{-|x|/R}(R^{-1} + |x|^{-1})  \end{equation}
for some $R>0$ and some dimensionless constants $C$.

\medskip

Moreover, we will also need decay in momentum space. We assume that \begin{equation}\label{eq:psi2} |\wh{\psi} (p)| \leq \frac{C R^{3/2}}{(1+ (Rp)^2)^{\gamma/2}} \end{equation}
for a dimensionless constant $C$, and for some $\gamma > 3/2$.

\subsection{The observable} 

For fixed $\delta, \theta >0$, we are interested in the probability that the electron is ejected in the direction $G(t/T)$ of the pulse (with $G(t/T) \to F (1)$, as $t/T \to \infty$). To this end, we propose to estimate the norm \[ N(t)= \| \chi_{\delta,\theta} (t) \cU (t,0) \psi \|,\] where the propagator $\cU (t,0)$ is defined in (\ref{eq:ev}), and 
\[ \chi_{\delta ,\theta} (t) = {\bf 1} (|x| \geq \delta t) \, {\bf 1} (x \cdot G(t/T) \geq |x| |G(t/T)| \cos \theta) \]
for some fixed positive $\delta,\theta$, with $\theta >0$ arbitrarily small. We will prove that if the dimensionless quantity $R\lambda$ is sufficiently large the norm $N(t)$ can be made arbitrarily close to one. Note that our results are {\it uniform} in time $t$. In particular, they hold in the limit of large $t/T$. We observe that, for large $t/T$, the direction of $G(t/T)$ approaches the direction of $F(1)$; in other words, the vector $F(1)$ determines the direction in which the electron propagates asymptotically, after ionization, in the limit of large $R\lambda$.  

\section{Results and Proofs}
\setcounter{equation}{0}

\subsection{Short range potentials}
\label{sec:sr}

We begin our analysis by considering an interaction potential decaying faster than Coulomb. That is, we assume, in this subsection, that $V$ satisfies condition (\ref{eq:pot-sr}), for some $\alpha >0$. 

{\it Notation.} Throughout the paper, $C$ will denote a universal constant, {\it independent} of the parameters $\lambda,T, R, D, V_0$ characterizing the pulse, the initial wave function, and the interaction potential.

{\it Remark.} Note that,  with our conventions, $[T]=[D]=[R]=length$, $[V_0]=[\lambda]=length^{-1}$, and $Z=Ze^2$ is \emph{dimensionless}. We have chosen the numerical value $m_{el}=1/2$ for the electron mass. Therefore, in the formulae below, $t$ stands for $t/2m_{el}$, $T$ stands for $T/2m_{el}$, $\delta$ stands for $2\delta m_{el}$, $V_0$ stands for $2V_0m_{el}$, and (in Section \ref{sec:cou}) $Z$ stands for $2 Z m_{el}$.

\begin{theorem}\label{thm:sr} Assume that conditions (\ref{eq:ass0}), (\ref{eq:ass1}), (\ref{eq:ass2}), (\ref{eq:pot-sr}) for some $\alpha >0$, (\ref{eq:psi1}), (\ref{eq:psi2}) for some $\gamma > 5/2$, are satisfied. Then we have that, uniformly in $t \geq T$,
\[ \begin{split}  \| \chi_{\delta,\theta} (t) \, \cU(t,0) \psi \| \geq \; &1 - C \left[ \frac{1}{R (C\lambda - \delta)} + \frac{1}{R \lambda \tan \theta} \right] \, \left[ 1+ \frac{R^2}{t} \right] \\ &-
\frac{C \,V_0 T}{\alpha (\lambda T/D)^{1+\alpha}} \, \left[ 1+ \frac{R^4}{T^2} \right] 
- CV_0 D R \left[ 1 + \frac{R^{4}}{T^2} \right] \, \kappa_{\lambda}
\end{split} \] where the dimensionless quantity $\kappa_\lambda$ is given by 
\begin{equation}\label{eq:kappa}   \kappa_\lambda = \inf_{0< s_0 <1}  \left\{ \frac{T}{R^2} \, s_0 + \frac{1}{R\lambda} \int_{s_0}^{1} \frac{\rd \tau}{|G(\tau)|} 
\left[ 1 + \frac{1}{\tau^2} \right] \right\} . \end{equation} 
\end{theorem}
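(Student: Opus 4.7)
The plan is to compare $\cU(t,0)\psi$ with the Volkov-type propagator $\cU_0(t,0):=\cU(t,0)|_{V\equiv 0}$, which admits the explicit factorization (\ref{eq:11}): up to a global phase, $\cU_0(t,0)\psi$ is the free Schr\"odinger evolution $e^{-itp^2}$ applied to the spatial translate $\psi(\,\cdot\,+2\lambda T G(t/T))$. Starting from the triangle inequality
\[
\|\chi_{\delta,\theta}(t)\,\cU(t,0)\psi\| \;\geq\; 1 - \|(1-\chi_{\delta,\theta}(t))\,\cU_0(t,0)\psi\| - \|(\cU(t,0)-\cU_0(t,0))\psi\|,
\]
the problem decomposes into two independent tasks: (A) showing that the Volkov packet $\cU_0(t,0)\psi$ is concentrated in the cone $\chi_{\delta,\theta}(t)$ up to a small error, and (B) bounding the Duhamel error caused by switching on $V$.

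For task (A), assumption (\ref{eq:ass2}) places the center of the displaced packet at distance at least $2C\lambda t$ from the nucleus, while the decay hypotheses (\ref{eq:psi1})--(\ref{eq:psi2}) together with dispersive estimates for $e^{-itp^2}$ control its position-space spread to be at most of order $R+t/R$. A standard phase-space argument then gives that the radial condition $|x|\geq\delta t$ fails with probability $\lesssim [1+R^2/t]/[R(C\lambda-\delta)]$ once $C\lambda>\delta$, and the angular condition (a cone of half-angle $\theta$ around $G(t/T)$) fails with probability $\lesssim [1+R^2/t]/(R\lambda\tan\theta)$, the latter being essentially the transverse-to-axial aspect ratio of the spreading packet.

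For task (B), Duhamel's formula combined with the unitarity of $\cU(t,s)$ reduces the error to $\int_0^t\|V\cU_0(s,0)\psi\|\,\rd s$; I split this integral at $s=Ts_0$ and $s=T$, and optimize over $s_0\in(0,1)$ at the end. On $[0,Ts_0]$ the Volkov packet has barely left the interaction region, so I use the decay and derivative bounds in (\ref{eq:psi1}) via a Hardy-type estimate to get $\|V\cU_0(s,0)\psi\|\lesssim V_0D/R$, contributing $\sim V_0DR\cdot(Ts_0/R^2)$. On $[Ts_0,T]$ the packet is centered at distance $\sim 2\lambda T|G(s/T)|$; inserting this into $|V(x)|\leq V_0D/|x|$, accounting for the free-dispersive spread of order $R+s/R$ (which is the origin of the $1+1/\tau^2$ factor), and integrating in time yields $V_0DR\cdot(R\lambda)^{-1}\int_{s_0}^1|G(\tau)|^{-1}[1+1/\tau^2]\,\rd\tau$. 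On $[T,t]$ the drift grows linearly in $s$ and the full long-range decay (\ref{eq:pot-sr}) can be exploited: $\|V\cU_0(s,0)\psi\|\lesssim V_0(D/(\lambda s))^{1+\alpha}$, and time integration produces the $V_0T/[\alpha(\lambda T/D)^{1+\alpha}]$ term. The $[1+R^4/T^2]$ prefactors in the last two estimates record the dispersive spread accumulated up to time $T$.

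The main technical obstacle is the intermediate regime $[Ts_0,T]$, during which the Volkov packet actually traverses the interaction zone: one needs simultaneously the sharp distance estimate coming from $|G(s/T)|$ and a careful dispersive correction to the packet's spread. Integrability of $|G|^{-1}$ near $s=0$ is precisely what assumption (\ref{eq:ass0}) provides, and the infimum in $\kappa_\lambda$ reflects the unavoidable tradeoff between the crude early-time bound (growing with $s_0$) and the singular behavior of the sharp bound as $s_0\to 0$.
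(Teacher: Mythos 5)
Your proposal is correct and follows essentially the same route as the paper: the single Duhamel bound $\int_0^t\|V\,\cU_0(s,0)\psi\|\,\rd s$ split at $Ts_0$ (Hardy bound), at $T$ (the $|G(\tau)|^{-1}[1+1/\tau^2]$ dispersive bound, i.e.\ Lemma \ref{lm:V1}), and beyond $T$ (using the extra $\alpha$-decay of $V$) reproduces exactly the paper's two-stage decomposition via Lemma \ref{lm:FKS} and estimate (\ref{eq:intV0}), and your task (A) is the paper's translation-plus-Chebyshev argument (\ref{eq:concl1})--(\ref{eq:concl2}). The only difference is packaging, not substance.
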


{\it Remark.} It follows from assumption (\ref{eq:ass0}) that $\kappa_\lambda \to 0$, as $R\lambda \to \infty$. In fact, it follows from (\ref{eq:ass0}) that the function $K(s_0) = \int_{s_0}^1 \rd \tau |G(\tau)|^{-1} (1+\tau^{-2})$ is finite, for all $s_0 >0$. Clearly, $K(s_0)$ is monotonically decreasing in $s_0$ and can therefore be inverted. Typically $K(s_0) \to \infty$, as $s_0 \to 0$. 
Thus, for $R\lambda$ large enough, we can choose $s_0 = K^{-1} ((R\lambda)^{1/2})$. Then $s_0 \to 0$ and $K(s_0) (\lambda R)^{-1} \to 0$, as $R\lambda \to \infty$. To have more precise information about how fast $\kappa_\lambda$ tends to zero, as $R\lambda \to \infty$, one needs more information about the pulse.

\medskip

{\it Example:} If $f(s) = \eps$, for a fixed $\eps \in \bR^3$, for all $s \in [0,1]$, and $f(s) = 0$ for $s \not \in [0,1]$, it follows that $F(s) = s \, \eps$ and $G(s) = (s^2/2) \eps$, for all $s \in [0,1]$. Then we find that 
\[ \begin{split} \kappa_{\lambda} =\; & \, \inf_{s_0 \in (0,1)} \left\{ \frac{T}{R^2} s_0 + \frac{2}{R\lambda} \int_{s_0}^{1} \frac{\rd \tau}{\tau^{2}} \left[ 1+ \frac{1}{\tau^2}  \right]\right\} \\ \leq \; &  \, \inf_{s_0 \in (0,1)} \left\{ \frac{T}{R^2} s_0 + \frac{2}{R\lambda s_0} + \frac{2}{3R\lambda s_0^3} \right\} . \end{split} \]
It is easy to check that the infimum is attained at $s_0^2 = (R/T\lambda) (1+ \sqrt{1+2T\lambda/R})$. For $R\lambda \gg 1$ (and $R^2 /T \simeq 1$), the infimum is attained at $t_0 \simeq (2TR^5/\lambda)^{1/4}$ and is given by \[ \kappa_\lambda \simeq \frac{4}{3} \, \left( \frac{2T^3}{R^7\lambda} \right)^{1/4} \, . \] 

{\it Remark.} It follows from Theorem \ref{thm:sr}, that, as $t \to \infty$, the electron will propagate, with probability approaching one, as $R\lambda \to \infty$, into the cone with an opening angle smaller than an arbitrary $\theta >0$ around the direction of $F(1)$. In other words,
with $\wt\chi_{\delta,\theta} (t) = {\bf 1} (|x| \geq \delta t) \, {\bf 1} (x \cdot F(1) \geq |x| |F(1)| \cos \theta)$, we find 
\begin{equation}\label{eq:wtchi} \begin{split} \lim\inf_{t\to \infty} \left\| \wt\chi_{\delta, \theta} (t)  \cU(t,0) \psi \right\| \geq \; &1 - C \left[ \frac{1}{R (C\lambda - \delta)} + \frac{1}{R \lambda \tan \theta} \right] \, \\ &- \frac{C V_0 T}{(\lambda T/D)^{1+\alpha}} \, \left[ 1+ \frac{R^4}{T^2} \right]  - CV_0 D R \left[ 1 + \frac{R^{4}}{T^2} \right] \, \kappa_{\lambda}
\end{split} \end{equation}
To prove (\ref{eq:wtchi}), observe that $\| {\bf 1} (x \cdot G(t/T) \geq |x| |G(t/T)| \cos \theta) \psi \| \geq \|  {\bf 1} (x \cdot F(1) \geq |x| |F(1)| \cos (\theta/2)) \psi \|$, if the angle between $G(t/T)$ and $F(1)$ is smaller than $\theta/2$. Since $G(t/T) = G(1) + (t/T-1) F(1)$, the angle between $G(t/T)$ and $F(1)$ is certainly smaller than $\theta/2$, for sufficiently large $t/T \gg 1$. 

\bigskip

To prove Theorem \ref{thm:sr}, we first show how the evolution up to time $T$ can be approximated by the evolution generated by the time dependent Kramers Hamiltonian without potential. The next lemma is due to Fring, Kostrykin and Schrader; see \cite{FKS}. 
\begin{lemma}\label{lm:FKS}
Let $\cU_0 (t,s) = e^{-i\int_s^t \rd \tau \, (p-A(\tau))^2}$. Assume that conditions (\ref{eq:ass1}), (\ref{eq:ass2}), (\ref{eq:psi1}), (\ref{eq:psi2}), and (\ref{eq:pot-sr}), for some $\alpha \geq 0$, are satisfied; (for $\alpha =0$ we recover the Coulomb potential (\ref{eq:pot3}) by taking $V_0 D = Z$). Then there exists a constant $C$ such that
\[ \left\| \left( \cU (T,0) - \cU_0 (T,0) \right) \psi \right\| \leq C V_0 D R \left[ 1 + \frac{R^{4}}{T^2} \right] \, \kappa_{\lambda} \]
with $\kappa_\lambda$ defined in (\ref{eq:kappa}). 
\end{lemma}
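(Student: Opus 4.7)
The natural starting point is Duhamel's formula comparing the full and potential-free propagators,
\[
\cU(T,0) - \cU_0(T,0) = -i \int_0^T \cU(T,s) \, V \, \cU_0(s,0) \, \rd s,
\]
which, by unitarity of $\cU(T,s)$, reduces the lemma to obtaining a sharp pointwise-in-$s$ estimate on $\|V \cU_0(s,0) \psi\|$ and integrating in $s$. Using the explicit factorization \eqref{eq:11} and the identity $(e^{i a \cdot p} \varphi)(x) = \varphi(x+a)$, I would then rewrite
\[
\|V \cU_0(s,0) \psi\|^2 = \int \bigl|V(y - 2 \lambda T G(s/T))\bigr|^2 \, \bigl|(e^{-i s p^2} \psi)(y)\bigr|^2 \, \rd y,
\]
so that the problem becomes geometric: $e^{-i s p^2}\psi$ stays localized near the origin on a scale controlled by $(R^2+s^2/R^2)^{1/2}$, while $V$ is sampled at distance $|y - 2\lambda T G(s/T)|$, and the bound becomes small exactly when the displacement $2\lambda T |G(s/T)|$ outgrows that spread.

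I would then split the $s$-integral at $s = T s_0$, with $s_0 \in (0,1)$ to be optimized at the end. On the short-time interval $s \in [0, T s_0]$ the displacement is not yet useful but the interval is short: using $|V(x)| \le V_0 D / |x|$ together with Hardy's inequality, and the fact that $\nabla$ commutes with both the free evolution and spatial translations, one obtains the uniform estimate
\[
\|V \cU_0(s,0) \psi\| \le C V_0 D \|\nabla \psi\| \le \frac{C V_0 D}{R},
\]
whose integration over $[0, T s_0]$ produces $C V_0 D T s_0 / R = C (V_0 D R) \cdot (T/R^2) s_0$, matching (and absorbed into) the first term in $\kappa_\lambda$ multiplied by the prefactor $V_0 D R$.

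On the long-time interval $[T s_0, T]$, after changing variable to $\tau = s/T$, I would partition $y$-space into the ball $\{|y| \le \lambda T |G(\tau)|\}$ and its complement. On the ball, $|y - 2\lambda T G(\tau)| \ge \lambda T |G(\tau)|$, so $|V(y - 2 \lambda T G(\tau))| \le V_0 D / (\lambda T |G(\tau)|)$ combined with $\|\psi\|=1$ controls the contribution by $V_0 D / (\lambda T |G(\tau)|)$. On the complement, Chebyshev together with the free-evolution second-moment identity $\langle x^2 \rangle_s = \|(x + 2 s p) \psi\|^2 \le C(R^2 + s^2/R^2)$ bounds the $L^2$-mass of $e^{-i s p^2} \psi$ beyond $\lambda T |G(\tau)|$, while the dispersive estimate $\|e^{-i s p^2} \psi\|_\infty \le C \min \bigl( R^{-3/2}, R^{3/2} s^{-3/2} \bigr)$ controls the residual singularity of $V$ near $y = 2\lambda T G(\tau)$ via the finite local integral $\int_{|z| \le D} |V|^2 \, \rd z < \infty$. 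Collecting the three pieces and integrating in $\tau \in [s_0,1]$ yields a contribution bounded by $C (V_0 D R)(1 + R^4/T^2)(R \lambda)^{-1} \int_{s_0}^1 |G(\tau)|^{-1}(1 + \tau^{-2}) \, \rd \tau$; taking the infimum over $s_0$ then collapses both contributions into the quantity $\kappa_\lambda$ defined in \eqref{eq:kappa}.

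The main obstacle lies precisely in this long-time spatial decomposition. The singularity of $V$ at $y = 2 \lambda T G(\tau)$ sits inside the complement region where the uniform bound on $V$ has been lost, and the competition between the non-dispersive bound $\|e^{-i s p^2} \psi\|_\infty \le C R^{-3/2}$ (valid for $s \le C R^2$) and the dispersive one $\le C R^{3/2}/s^{3/2}$ (valid for $s \ge C R^2$) must be interpolated cleanly across $s \sim R^2$ and matched with the ball/complement split to reproduce the precise integrand $(1 + \tau^{-2}) |G(\tau)|^{-1}$. Both the outer factor $(1 + R^4/T^2)$ and the factor $(1 + \tau^{-2})$ in the statement originate from this interpolation.
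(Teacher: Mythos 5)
Your overall architecture is exactly the paper's: Duhamel/Cook reduces the claim to bounding $\int_0^T \| V(x - 2\lambda T G(s/T))\, e^{-isp^2}\psi\|\,\rd s$ (the paper gets there by conjugating $\cU$ with the translation--phase factor to form $\wt{\cU}$, which is algebraically the same step), Hardy's inequality handles $s \in [0, Ts_0]$ and produces the $T s_0/R^2$ term, and optimizing over $s_0$ yields $\kappa_\lambda$. All of that is correct.

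The gap is in the long-time integrand estimate, i.e.\ in your substitute for Lemma \ref{lm:V1}. Your ball/complement split works on the ball $\{|y|\le \lambda T|G(\tau)|\}$, where $|V(y-2\lambda T G(\tau))|\le CV_0D/(\lambda T |G(\tau)|)$. But on the complement the singularity of $V$ at $y = 2\lambda T G(\tau)$ cannot be controlled by ``global $L^\infty$ of $e^{-isp^2}\psi$ times $\int_{|z|\le D}|V|^2\,\rd z$'': that product is of order $V_0^2 D^3 \min(R^{-3}, R^3 s^{-3})$, which carries \emph{no decay in $\lambda$ at all}, whereas the bound you must reproduce is $C V_0 D\,(\lambda T |G(\tau)|)^{-1}[1+R^4/s^2]$. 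Interpolating between $R^{-3/2}$ and $R^{3/2}s^{-3/2}$ across $s\sim R^2$, as you propose, cannot fix this, because neither sup-norm bound knows where the mass of $e^{-isp^2}\psi$ is located. Likewise, the Chebyshev/second-moment tail bound $C(R^2+s^2/R^2)/|a|^2$ on $\int_{B^c}|e^{-isp^2}\psi|^2$ cannot be combined with a uniform bound on $V$ there, since $V$ is unbounded at $y=a$. What is needed -- and what the paper's Lemma \ref{lm:V1} supplies -- is a \emph{pointwise} decay estimate for $(e^{-isp^2}\psi)(y)$ at distance $|y|\sim 2\lambda T|G(\tau)|$ from the origin: one writes $(e^{-isp^2}\psi)(x) = (4\pi i s)^{-3/2} e^{ix^2/4s}\,\wh{\psi}(x/2s) + \text{remainder}$, uses the momentum-space decay (\ref{eq:psi2}) for the leading term (which is why $\gamma>3/2$ is assumed), and integrates by parts in the remainder to extract $(Rx/s)^{-2}$ decay; only then does the weighted integral against $V^2(\cdot - 2\lambda T G(\tau))$ produce the factor $(\lambda T|G(\tau)|)^{-1}$ together with $[1+R^4/s^2]$ as in (\ref{eq:lm3}). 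Without an estimate of this pointwise type, the neighborhood of the singular point contributes a term that does not vanish as $R\lambda\to\infty$, and the lemma does not follow.
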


\begin{proof}
We define the new propagator \[ \wt{\cU} (s,0) = e^{-2ip \cdot \int_0^s \rd \tau \, A(\tau)} \, e^{i\int_0^s \rd \tau A^2 (\tau)} \, \cU (s,0) . \] Then \[ i \frac{d}{ds} \wt{\cU} (s,0) = \wt{H} (s) \wt{\cU} (s,0)  \] with \[ \wt{H} (s) = p^2 + V (x - 2\lambda T G(s/T))\,. \] Since \[ \cU (T,0) - \cU_0 (T,0) = e^{2i\lambda T  p \cdot G(1)} e^{-i\int_0^T \rd\tau \, A^2 (\tau)} \, \left( \wt{\cU} (T,0) - e^{-iT p^2} \right) \] we find that
\begin{equation} \left\| \left( \cU (T,0) - \cU_0 (T,0) \right) \psi \right\| = \left\| \left( \wt{\cU} (T,0) - e^{-iTp^2} \right) \psi \right\| \leq \int_0^T \rd s \, \left\| V(x-2\lambda T G(s/T)) e^{-isp^2} \psi \right\| \,.\end{equation}

Now, we observe that, on one hand, by (\ref{eq:pot-sr}), 
\begin{equation}\label{eq:H1} \| V(x-2\lambda T G(s/T)) e^{-isp^2} \psi \| \leq 2 V_0 D   \, \int \rd x |\nabla \psi (x)|^2 \leq C \frac{V_0 D}{R}\,.
\end{equation}
On the other hand, by Lemma \ref{lm:V1} (see (\ref{eq:lm3}) below), we have that
\[ \| V(x-2\lambda T G(s/T)) e^{-isp^2} \psi \| \leq \frac{C V_0 D}{\lambda T \, |G(s/T)|}   \left[ 1 + \frac{R^{4}}{s^2} \right] \,. \]
(We are neglecting here the factor $(1+\lambda^2 T^2 |G(s/T)|^2 / D^2)^{-\alpha/2}$ on the r.h.s. of (\ref{eq:lm3}). This factor will play an important role for large times; here it would just give a faster decay in $\lambda$.) Thus
\begin{equation}\begin{split} \left\| \left( \cU (T,0) - \cU_0 (T,0) \right) \psi \right\| \leq &\; \int_0^{t_0} \rd s \, \frac{C V_0 D}{R} + \int_{t_0}^T \rd s \frac{C V_0 D}{\lambda T \, |G(s/T)|}  
\left[ 1 + \frac{R^{4}}{s^2} \right],  \end{split}
\end{equation} for arbitrary $t_0 \in [0,T]$, and hence
\[ \left\| \left( \cU (T,0) - \cU_0 (T,0) \right) \psi \right\| \leq C V_0 D \left[ 1 + \frac{R^{4}}{T^2} \right] \,  \inf_{0< t_0 <T} \left\{\frac{t_0}{R} + \frac{1}{\lambda} \int_{t_0/T}^{1} \frac{\rd \tau}{|G(\tau)|} 
\left[ 1 + \frac{1}{\tau^2} \right] \right\}. \]
\end{proof}

\begin{proof}[Proof of Theorem \ref{thm:sr}] We begin by writing \[ \begin{split} \chi_{\delta,\theta} (t) \, \cU (t,0) \psi =\;& \chi_{\delta, \theta} (t) \, \cU (t,T) \, \cU(T,0) \psi \\ = \; &\chi_{\delta,\theta} (t) \,\cU (t,T) \, \left( \cU(T,0) - e^{-i\int_0^T \rd \tau \, (p-A(\tau))^2} \right) \psi \\ &+ \chi_{\delta,\theta} (t) \, \cU(t,T)  \, e^{-i\int_0^T \rd\tau \, (p-A(\tau))^2} \psi \,. \end{split} \]
Therefore, by Lemma \ref{lm:FKS},
\[ \begin{split} \| \chi_{\delta,\theta} (t)  \, \cU (t,0) \psi \| \geq \; & \| \chi_{\delta,\theta} (t) \, \cU (t,T) e^{-i\int_0^T \rd \tau \, (p-A(\tau))^2} \psi \| - \| ( \cU(T,0) - e^{-i\int_0^T \rd\tau \, (p-A(\tau))^2} ) \psi \| \\ \geq \; & \| \chi_{\delta,\theta}(t) \, \cU (t,T) e^{-i\int_0^T \rd\tau (p-A(\tau))^2} \psi \|  - CV_0 D R \left[ 1 + \frac{R^{4}}{T^2} \right] \, \kappa_{\lambda}  ,\end{split} \]
with $\kappa_\lambda$ defined in (\ref{eq:kappa}).
Since $A(t) = A(T)$, for all $t >T$, we obtain that
\begin{equation}\label{eq:1}
 \| \chi_{\delta,\theta} (t) \, \cU (t,0) \psi \| \geq \| \chi_{\delta,\theta} (t) \, e^{-i(t-T) \left[ (p-A(T))^2 + V(x) \right]} \,  e^{-i\int_0^T \rd \tau \, (p-A(\tau))^2} \psi \|  - C V_0 D R \left[ 1 + \frac{R^{4}}{T^2} \right] \, \kappa_{\lambda}\,.\end{equation}
Next, we notice that
\begin{equation}\label{eq:intV0}
\begin{split}
\| \chi_{\delta,\theta} (t) \, & e^{-i(t-T) \left[ (p-A(T))^2 + V(x) \right]} \,  e^{-i\int_0^T \rd\tau \,(p-A(\tau))^2} \psi \| \\ \geq \; & \| \chi_{\delta,\theta} (t) \,  e^{-i\int_0^t \rd\tau \, (p-A(\tau))^2} \psi \| \\ &- \left\| \left( e^{-i(t-T) \left[ (p-A(T))^2 + V(x) \right]} - e^{-i(t-T) (p-A(T))^2} \right) \,  e^{-i \int_0^T \rd\tau  (p-A(\tau))^2} \psi \right\| \\
\geq \; & \| \chi_{\delta,\theta} (t) \,  e^{-i\int_0^t \rd\tau \, (p-A(\tau))^2} \psi \| - \int_0^{t-T} \rd s \left\| V(x) e^{-i \int_0^{T+s} \rd \tau (p-A(\tau))^2} \psi \right\|  \,.
\end{split}\end{equation}
We then use that
\begin{equation}\label{eq:intV1}
\begin{split}
\int_0^{t-T} \rd s \left\| V(x) e^{-i \int_0^{T+s} \rd \tau (p-A(\tau))^2} \psi \right\|  = \; &
\int_0^{t-T} \rd s \left\| V(x) e^{-i (T+s) p^2} e^{2 i \lambda T p \cdot G(1+s/T)}\psi \right\| \\ = \; & \int_0^{t-T} \rd s \left\| V(x-2\lambda T G(1+s/T)) e^{-i (T+s) p^2} \psi \right\|\,.
\end{split}\end{equation}
To bound the integrand, we observe that, 
by Lemma \ref{lm:V1} (see (\ref{eq:lm3}) below),
\[ \begin{split} \left\| V(x-2\lambda T G(1+s/T)) e^{-i (T+s) p^2} \psi \right\| \leq 
\frac{C V_0}{(\lambda T  |G(1+s/T)| / D)^{1+\alpha}}   \left[ 1 + \frac{R^{4}}{T^2} \right]
\end{split} \]
for all $s \geq 0$. Hence, by (\ref{eq:ass2}),
\[ \begin{split} 
\int_0^{t-T} \rd s \, & \left\| V(x-2\lambda T G(1+s/T)) e^{-i (T+s) p^2} \psi \right\| \\ &\hspace{2cm} \leq \;  \frac{C V_0}{(\lambda T / D)^{1+\alpha}} \, \left[ 1+ \frac{R^4}{T^2}  \right] \int_{0}^{t-T} \rd s \, \frac{1}{|G(1+s/T)|^{1+\alpha}} \\ & \hspace{2cm}
 \leq \;  \frac{C V_0 T}{(\lambda T / D)^{1+\alpha}} \, \left[ 1+ \frac{R^4}{T^2}  \right] \int_{0}^{\infty} \rd \tau \, \frac{1}{(1+\tau)^{1+\alpha}} 
 \\ & \hspace{2cm}
 \leq \;  \frac{C V_0 T}{\alpha (\lambda T / D)^{1+\alpha}} \, \left[ 1+ \frac{R^4}{T^2}  \right] .
\end{split} \,.\]
 
Therefore, from (\ref{eq:intV0}), 
\begin{equation}\label{eq:concl0} \begin{split} \| \chi_{\delta,\theta} (t) \, & e^{-i(t-T) \left[ (p-A(T))^2 + V(x) \right]} \,  e^{-i\int_0^T \rd\tau \,(p-A(\tau))^2} \psi \| \\ & \geq  \| \chi_{\delta,\theta} (t) \,  e^{-i\int_0^t \rd\tau \, (p-A(\tau))^2} \psi \| - 
\frac{C V_0 T}{\alpha  \, (\lambda T/D)^{1+\alpha}} \, \left[ 1+ \frac{R^4}{T^2} \right] \,. \end{split}\end{equation}
The first term on the right hand side of (\ref{eq:concl0}) can be bounded by
\begin{equation}\label{eq:concl1}
\begin{split} \| \chi_{\delta,\theta} (t) \, &e^{-i\int_0^t \rd\tau \, (p-A(\tau))^2 } \psi \| \\ \geq \; & 1 - \| {\bf 1} (|x| \leq \delta t) e^{2i\lambda T p \cdot G(t/T)} \, e^{-it p^2} \psi \| \\ &- \| {\bf 1} (x \cdot G(t/T) \leq |x| |G(t/T)| \cos \theta)  e^{2i\lambda T p \cdot G(t/T)} \, e^{-it p^2} \psi \| \\
\geq \; & 1 - \| {\bf 1} (|x-2\lambda T G(t/T)| \leq \delta t) \, e^{-it p^2} \psi \| \\ &- \| {\bf 1} ((x-2\lambda T G(t/T)) \cdot G(t/T) \leq |x-2\lambda T G(t/T)| |G(t/T)| \cos \theta)  \, e^{-it p^2} \psi \| \,.
\end{split} \end{equation}
Since, by (\ref{eq:ass2}), $|G(s)| \geq C s$ for all $s \geq 1$, we find that
\begin{equation}\label{eq:concl2} \begin{split} \| \chi_{\delta,\theta} (t) \, e^{-i\int_0^t \rd\tau \, (p-A(\tau))^2 } \psi \| \geq 
1 - \| {\bf 1} (|x| \geq (C\lambda -\delta) t) \, e^{-it p^2} \psi \| - \| {\bf 1} (|x| \geq C\lambda t \tan \theta) e^{-it p^2} \psi \| \,.
\end{split} \end{equation}
To conclude, we observe that
\[ \begin{split} \| {\bf 1} (|x| \geq K t) \, e^{-it p^2} \psi \|^2 \leq \; & \frac{1}{(Kt)^2} \langle e^{-itp^2} \psi, x^2 e^{-itp^2} \psi \rangle \leq  \frac{1}{(Kt)^2} \langle \psi, (x+2tp)^2 \psi \rangle  \\ \leq \; & \frac{2}{(Kt)^2} \langle \psi, (x^2 +4 t^2 p^2) \psi \rangle \leq \frac{C}{(Kt)^2} (R^2 + t^2 R^{-2}) \leq \frac{C}{(KR)^2} \left(1+\frac{R^4}{t^2} \right) \end{split}\] 
using (\ref{eq:psi1}), and (\ref{eq:psi2}) for some $\gamma >5/2$. Hence, (\ref{eq:concl2}) yields 
\[  \| \chi_{\delta,\theta} (t) \, e^{-i\int_0^t \rd\tau \, (p-A(\tau))^2 } \psi \| \geq 
1 - C \left[ \frac{1}{R (C\lambda - \delta)} + \frac{1}{R \lambda \tan \theta} \right] \, \left[ 1+ \frac{R^2}{t} \right] \,. \]
Together with (\ref{eq:1}) and (\ref{eq:concl0}), this concludes the proof of the theorem. 
\end{proof}

\begin{lemma}\label{lm:V1}
Assume (\ref{eq:pot-sr}), for some $\alpha >0$, and (\ref{eq:psi1}), (\ref{eq:psi2}), for some $\gamma >3/2$. Then
\begin{equation}\label{eq:lm3} \| V(x-2\lambda T G(t/T)) e^{-itp^2} \psi \| \leq \frac{C V_0 D}{\lambda T \, |G(t/T)|}  \frac{1}{(1+  \frac{\lambda^2 T^2}{D^2} |G(t/T)|^2)^{\alpha/2}} \left[ 1 + \frac{R^{4}}{t^2} \right]\,. \end{equation}
\end{lemma}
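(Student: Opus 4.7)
Setting $\ell = 2\lambda T G(t/T)$ and $M = |\ell|$, my plan is to split $\|V(x-\ell)e^{-itp^2}\psi\|^2 = I_{\rm far}+I_{\rm near}$, with $I_{\rm far}$ the integral over $\{|x-\ell|\geq M/2\}$ and $I_{\rm near}$ the integral over $\{|x-\ell|<M/2\}$. The geometry of the second region forces $|x|>M/2$, so that $\psi_t:=e^{-itp^2}\psi$ is evaluated only in a tail far from where $\psi$ is concentrated; this is the mechanism that will give the decay in $M$ for both pieces.

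For $I_{\rm far}$, the map $r\mapsto r^{-1}(1+r^2/D^2)^{-\alpha/2}$ is decreasing, so assumption (\ref{eq:pot-sr}) yields the pointwise bound $|V(x-\ell)|\leq C V_0 D/[M(1+M^2/D^2)^{\alpha/2}]$ on the region of integration, and combining this with $\|\psi\|\leq 1$ already gives a contribution of the desired form (without the factor $[1+R^4/t^2]$). For $I_{\rm near}$, I would bound $|V(x-\ell)|\leq V_0 D/|x-\ell|$ and apply the Hardy inequality with a smooth cutoff $\chi$ equal to $1$ on $\{|x-\ell|\leq M/2\}$, vanishing on $\{|x-\ell|\geq M\}$, and satisfying $|\nabla\chi|\leq C/M$. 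This reduces matters to $\|\nabla(\chi \psi_t)\|^2 \lesssim \|(\nabla\chi)\psi_t\|^2+\|\chi\nabla\psi_t\|^2$; both pieces are essentially localised to $\{|x|\geq M/4\}$. Using the Heisenberg-picture identity $x\,e^{-itp^2}=e^{-itp^2}(x+2tp)$, the commutativity of $\nabla$ with the free evolution, and Chebyshev, one obtains $\|\mathbf 1(|x|\geq M/4)\psi_t\|^2 \leq (C/M^2)\|(x+2tp)\psi\|^2$ and the analogous bound with $\psi$ replaced by $\nabla\psi$. Hypothesis (\ref{eq:psi1}) supplies $\|x\psi\|,\|x\nabla\psi\|\lesssim 1$ and $\|\nabla\psi\|\lesssim R^{-1}$, and via the Plancherel identity $\sum_{i,j}\|\partial_i\partial_j\psi\|^2=\|\Delta\psi\|^2$ also $\|\Delta\psi\|\lesssim R^{-2}$, so that $\|(x+2tp)\psi\|^2\lesssim R^2+t^2/R^2$ and $\|(x+2tp)\nabla\psi\|^2\lesssim 1+t^2/R^4$. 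Assembling the pieces produces $I_{\rm near}\leq CV_0^2 D^2 M^{-2}(1+t^2/R^4)$.

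Adding the two contributions, taking a square root, and rewriting $1+t^2/R^4 = (t/R^2)^2(1+R^4/t^2)$ should, after absorbing the surplus $t/R^2$ factor against the decay $(1+M^2/D^2)^{-\alpha/2}$ coming from $I_{\rm far}$, deliver the stated estimate. The main obstacle that I anticipate is precisely this last algebraic rearrangement: turning the square-root form $(1+t^2/R^4)^{1/2}$ into the product $(1+M^2/D^2)^{-\alpha/2}[1+R^4/t^2]$ cleanly across all relevant parameter regimes. In the regime in which the lemma is actually applied in the paper (namely $M\gg D$ and $t\gtrsim R^2$, so that $(1+R^4/t^2)\simeq 1$), I expect one can close the argument either by (i)~interpolating the Hardy bound against the dispersive $L^\infty$ bound $\|\psi_t\|_\infty\lesssim R^{3/2}/t^{3/2}$, which sharpens the $t$-dependence of $I_{\rm near}$ and is where the hypothesis $\gamma>3/2$ on $\wh\psi$ enters, or (ii)~replacing Hardy by a weighted Hardy inequality centred at $\ell$ that imports the far-field decay of $V$ directly into $I_{\rm near}$.
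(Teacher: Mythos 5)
Your far-field piece is fine, and your near-field computation is internally consistent, but the bound it produces, $I_{\rm near}^{1/2}\leq C V_0 D M^{-1}(1+t^2/R^4)^{1/2}$, is genuinely weaker than the statement of the lemma in two ways that no ``algebraic rearrangement'' can repair. First, the claimed error factor is $[1+R^4/t^2]$, which is \emph{bounded} for $t\gtrsim R^2$, whereas $(1+t^2/R^4)^{1/2}\sim t/R^2$ grows linearly in $t$. Second, your near-field term carries no factor $(1+\lambda^2T^2|G(t/T)|^2/D^2)^{-\alpha/2}$: the Hardy step only uses $|V(y)|\leq V_0D/|y|$ and discards the extra decay of $V$. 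Both losses are fatal for the way the lemma is used. In the proof of Theorem \ref{thm:sr} one integrates $\|V(x-2\lambda TG(1+s/T))e^{-i(T+s)p^2}\psi\|$ over $s\in[0,\infty)$, with $M=2\lambda T|G(1+s/T)|\sim\lambda(T+s)$ by (\ref{eq:ass2}); your bound then gives an integrand of size $V_0D/(\lambda R^2)$, constant in $s$, while the lemma gives $\sim s^{-1-\alpha}$, integrable precisely because of the $\alpha$-factor (the paper explicitly remarks, in the proof of Lemma \ref{lm:FKS}, that this factor ``will play an important role for large times''). Your proposed repairs do not close the gap: the dispersive bound $\|e^{-itp^2}\psi\|_\infty\lesssim R^{3/2}t^{-3/2}$ does not localize, so on the near region it yields decay in $t$ but none in $M$; and the ``weighted Hardy inequality centred at $\ell$'' is not a standard inequality --- it is essentially the estimate to be proved.

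What is actually needed, and what the paper does, is a \emph{pointwise} decay estimate for $(e^{-itp^2}\psi)(x)$ in the variable $R|x|/t$. Writing the free kernel explicitly, $(e^{-itp^2}\psi)(x)=(4\pi it)^{-3/2}e^{ix^2/4t}\bigl[\wh\psi(x/2t)+\rho_t(x)\bigr]$, where two integrations by parts of $(e^{iy^2/4t}-1)$ against $e^{-iy\cdot x/2t}$ give $|\rho_t(x)|\lesssim R^{3/2}(R^2/t)(1+(R|x|/t)^2)^{-1}(1+R^2/t)$ (this is where the bounds on $\nabla\psi$ and $\Delta\psi$ in (\ref{eq:psi1}) enter). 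One then evaluates $\int V^2(x-2\lambda TG(t/T))|e^{-itp^2}\psi(x)|^2\,\rd x$ directly against these decaying profiles; the decay of $|\wh\psi|^2$ at rate $2\gamma>3$ from (\ref{eq:psi2}) --- a hypothesis your argument never uses --- together with the decay $(1+(R|x|/t)^2)^{-2}$ of the remainder is what simultaneously produces the $M^{-1}$, the $(1+M^2/D^2)^{-\alpha/2}$, and the bounded factor $[1+R^4/t^2]$.
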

\begin{proof}
We notice that
\[ \begin{split}
(e^{-it p^2} \psi) (x)= \; &\frac{e^{ix^2/4t}}{(4\pi i t)^{3/2}} \int \rd y \, e^{-iy \cdot x/2t} e^{iy^2/4t} \psi (y) \\ = \; &
\frac{e^{ix^2/4t}}{(4\pi i t)^{3/2}} \wh{\psi} (x/2t) + \frac{e^{ix^2/4t}}{(4\pi i t)^{3/2}} \int \rd y \, e^{-iy \cdot x/2t} \left( e^{iy^2/4t} - 1\right) \, \psi (y) \,.
\end{split}
\]
In the second term, we perform integration by parts to obtain decay in the $x$-variable.
\[ \begin{split}
\int \rd y \, &e^{-iy \cdot x/2t} \left( e^{iy^2/4t} - 1\right) \, \psi (y) \\ = \; &-\int \rd y \, \frac{\Delta_y e^{-iy \cdot x/2t}}{(x/2t)^2} \left( e^{i y^2/ 4t} - 1 \right) \psi (y) \\ 
= \; &-\frac{1}{(x/2t)^2} \, \int \rd y \, e^{-iy \cdot x/2t} \\ &\times \left[ (\Delta \psi) (y) \,  \left( e^{i y^2/ 4t} - 1 \right) + i (\nabla \psi) (y) \cdot \frac{y}{2t} \,  e^{i y^2/4t} + \psi (y) \left(-\frac{y^2}{4t^2} +  \frac{3i}{2t}\right) e^{iy^2/4t} \right] \, .\end{split} \]
Therefore, we obtain that
\[ \begin{split} \Big| \int \rd y \, e^{-iy \cdot x/2t} &\left( e^{iy^2/4t} - 1\right) \, \psi (y) \Big| \\  \leq & \; \frac{1}{(x/2t)^2}  \int \rd y \, \left\{ |\Delta \psi (y)| \frac{|y|^2}{4t} + |\nabla \psi (y)| \frac{|y|}{t} + |\psi (y)| \left(\frac{3}{2t} + \frac{|y|^2}{4t^2} \right) \right\} \\ \leq &  \; 
C \; \frac{R^{3/2}}{t} \, \frac{1}{(x/t)^2} \left( 1 + \frac{R^2}{t} \right) \, .\end{split} \]
Since, on the other hand,
\[ \left| \int \rd y \, e^{-iy \cdot x/2t} \left( e^{iy^2/4t} - 1\right) \, \psi (y) \right|  \leq C R^{3/2} \frac{R^2}{t} \, , \] 
it follows that
\[ \left| \int \rd y \, e^{-iy \cdot x/2t} \left( e^{iy^2/4t} - 1\right) \, \psi (y) \right|  \leq 
C \, R^{3/2} \, \frac{R^{2}}{t} \frac{1}{1 + (Rx/t)^2}  \left(1 + \frac{R^2}{t} \right)\,. \] 
Hence, by using (\ref{eq:pot-sr}) and (\ref{eq:psi2}), 
\[ \begin{split} 
 \Big\| V(x-2\lambda &T G(t/T)) e^{-i tp^2} \psi \Big\|^2 \\ \leq \; &  C \int \rd x \, V^2 (x-2\lambda T G(t/T)) \frac{|\wh{\psi} (x/2t)|^2}{t^3} \\ &+ C \, R^{3} \, \frac{R^{4}}{t^2} \left(1+\frac{R^2}{t} \right)^2
  \int \frac{\rd x}{t^3} \, V^2 (x-2\lambda T G(t/T))  \frac{1}{(1 + (Rx/t)^2)^2}
\\ \leq \; &  \frac{C V^2_0 D^{2}}{t^2} \int \frac{\rd x}{|x-\lambda (T/t) G(t/T)|^{2}} \, \frac{1}{(1+4 t^2 D^{-2} |x - \lambda (T/t) G(t/T)|^2)^{\alpha}} |\wh{\psi} (x)|^2 \\ &+ \frac{C V^2_0 D^{2}}{t^{2}} \, 
    R^{3} \, \frac{R^{4}}{t^2} \left(1+\frac{R^2}{t} \right)^2  \\ &\times \int \, \frac{\rd x}{|x-\lambda (T/t) G(T/t)|^{2}} \frac{1}{(1+4 t^2 D^{-2} |x- \lambda (T/t) G(T/t)|^2)^{\alpha}}  \frac{1}{(1 + (Rx)^2)^2} \\ \leq \; &\frac{C V^2_0 D^{2} R^2}{t^2} \left[ 1 + \frac{R^{4}}{t^2} \left(1+\frac{R^2}{t} \right)^2 \right]  \\ &\times \int \, \frac{\rd x}{|x-2 R\lambda (T/t) G(t/T)|^{2}} \frac{1}{(1+ 4t^2 R^{-2} D^{-2} |x- 2 R\lambda (T/t) G(t/T)|^2)^{\alpha}}  \frac{1}{(1 + x^2)^\beta} \end{split} \]
where $\beta = \min (\gamma, 2) >3/2$. It follows that
\[ \begin{split}
 \left\| V(x-2\lambda T G(t/T)) e^{-i tp^2} \psi \right\|^2 \leq
      \; &\frac{C V^2_0 D^{2}}{\lambda^2 T^2 |G(t/T)|^2}  \frac{1}{(1+  \frac{\lambda^2 T^2}{D^2} |G(t/T)|^2)^{\alpha}} \left[ 1 + \frac{R^{4}}{t^2} \right]^2 \,. \end{split} \]
\end{proof}

\subsection{Coulomb potentials}
\label{sec:cou}

In this section we consider the physically more interesting case of a Coulomb interaction. 
The long range of the Coulomb potential requires some modification of the argument used in the previous section; in particular, to obtain results uniform in time, we need to approximate the long time evolution by a ``Dollard-modified'' free dynamics (see \cite{10}).

As initial data we consider here the ground state of the Schr\"odinger operator with an attractive Coulomb interaction, which satisfies the assumptions (\ref{eq:psi1}), and (\ref{eq:psi2}), with $\gamma =4$. (In the following theorem we will therefore assume (\ref{eq:psi2}) with $\gamma=4$; but, of course, other values of $\gamma$ can also be considered.)

\begin{theorem}\label{thm:cou} Assume that conditions (\ref{eq:ass1}), (\ref{eq:ass2}), (\ref{eq:pot3}), and (\ref{eq:psi2}), for $\gamma = 4$, are satisfied. Suppose that there exists a constant $C$ such that $C^{-1} \leq R^2/ T \leq C$, that $Z \leq \lambda$, and that $\lambda R \geq 1$ is large enough. Then we have that, uniformly in $t \geq T$, 
\[ \begin{split} \| \chi_{\delta,\theta} (t) \, \cU(t,0) \psi \| \geq \; & 
1 -  C \left(\frac{1}{R\lambda \tan\theta} + \frac{1}{R (C\lambda-\delta)}\right)  \left(1+\frac{R^2}{t} \right) \\ &-  Z R \left[ 1 + \frac{R^{4}}{T^2} \right] \, \kappa_{\lambda} 
  - \frac{C}{(R\lambda)^{1/7}}  \left( \frac{Z T^{3/2}}{R^2} \right)^{4/7} \end{split} \]
where the dimensionless quantity $\kappa_\lambda$ was defined in (\ref{eq:kappa}). Since, by assumption (\ref{eq:ass0}), $\kappa_\lambda \to 0$, as $(\lambda R) \to \infty$, it follows in particular that
\[ \lim_{\lambda R \to \infty} \| \chi_{\delta,\theta} (t) \, \cU(t,0) \psi \| = 1 \] uniformly in $t \geq T$.
\end{theorem}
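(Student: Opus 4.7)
The plan is to mirror the proof of Theorem~\ref{thm:sr} as closely as possible, modifying only the step that fails for the Coulomb interaction. First I would apply Lemma~\ref{lm:FKS} with $\alpha=0$ and $V_0 D=Z$; this gives $\|[\cU(T,0)-\cU_0(T,0)]\psi\|\le CZR(1+R^4/T^2)\kappa_\lambda$, which produces the $ZR[1+R^4/T^2]\kappa_\lambda$ term in the statement and reduces the problem to bounding $\|\chi_{\delta,\theta}(t)\,\cU(t,T)\cU_0(T,0)\psi\|$ from below for $t\ge T$. In the short-range argument the corresponding step controls $\cU(t,T)-\cU_0(t,T)$ by integrating $\|V\,e^{-i\int_0^{T+s}(p-A(\tau))^2 d\tau}\psi\|$ over $s\in[0,t-T]$, but for the Coulomb potential Lemma~\ref{lm:V1} gives only $O(Z/(\lambda(T+s)))$ and this integral diverges logarithmically; a new idea is needed here.

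To replace that divergent integral I would pass to the Coulomb gauge and use a Dollard-type comparison. Writing $\cU(t,T)=e^{iA(T)\cdot x}\,\cU_C(t-T)\,e^{-iA(T)\cdot x}$ with $\cU_C(s)=e^{-isH_C}$ and $H_C=p^2-Z/|x|$, and noting that $\chi_{\delta,\theta}(t)$ commutes with the $x$-multiplier $e^{iA(T)\cdot x}$, the problem reduces to controlling $\|\chi_{\delta,\theta}(t)\,\cU_C(t-T)\tilde\Phi_T\|$, where $\tilde\Phi_T:=e^{-iA(T)\cdot x}\cU_0(T,0)\psi$ has Fourier transform concentrated near $k=-A(T)=-\lambda F(1)$. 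I would then introduce the Dollard-modified free evolution $U_D(s):=e^{-isp^2}e^{-iX(p,s)}$ with phase $X(p,s)=-\tfrac{Z}{2|p|}\log(4|p|^2 s)$ and prove the quantitative uniform-in-$s$ estimate $\|\cU_C(s)\tilde\Phi_T-U_D(s)\tilde\Phi_T\|\to 0$ as $R\lambda\to\infty$. Because $|p|\sim\lambda|F(1)|$ on the momentum support of $\tilde\Phi_T$, the Dollard phase itself is of order $Z\log(\lambda s)/\lambda$, so $|e^{-iX(p,s)}-1|$ is small on this support and we may further replace $U_D(s)\tilde\Phi_T$ by $e^{-isp^2}\tilde\Phi_T$. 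Undoing the gauge through the identity $e^{iA(T)\cdot x}e^{-isp^2}e^{-iA(T)\cdot x}=e^{-is(p-A(T))^2}$ then gives $\cU(t,0)\psi\approx\cU_0(t,0)\psi$ up to Dollard errors, and the lower bound on $\|\chi_{\delta,\theta}(t)\cU_0(t,0)\psi\|$ is produced exactly as in the concluding displays of the proof of Theorem~\ref{thm:sr}: translate the indicators inside $\chi_{\delta,\theta}$ through the Kramers shift $e^{2i\lambda Tp\cdot G(t/T)}$, use (\ref{eq:ass2}) to convert the angular constraint into $|x|\ge C\lambda t\tan\theta$, and apply the dispersive estimate $\|{\bf 1}(|x|\ge Kt)e^{-itp^2}\psi\|^2\le C(KR)^{-2}(1+R^4/t^2)$ to produce the $C(1/(R(C\lambda-\delta))+1/(R\lambda\tan\theta))(1+R^2/t)$ contribution.

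The main obstacle will be the quantitative, uniform-in-$s$ Dollard approximation in the previous paragraph, since the classical Dollard theorem is typically stated only as a strong limit as $s\to\infty$, whereas here one needs an explicit rate, uniformly in $s\ge 0$. I would handle this by splitting the time integral at some intermediate scale $s^*$: on $[0,s^*]$ the Coulomb--free difference can still be bounded directly using Lemma~\ref{lm:V1}, exploiting that the wave packet is already displaced by $\sim\lambda T$ from the origin at time $T$ (so that $1/|x|$ is controlled by $1/(\lambda T)$); on $[s^*,\infty)$ a momentum-space analysis of $U_D(s)\tilde\Phi_T$, using the polynomial decay $\gamma=4$ in (\ref{eq:psi2}), produces a remainder decaying like a power of $1/s^*$. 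Optimizing $s^*$ to balance these two competing power-law errors is presumably the mechanism that produces the unusual $(R\lambda)^{-1/7}(ZT^{3/2}/R^2)^{4/7}$ rate in the statement, and the hypothesis $C^{-1}\le R^2/T\le C$ is what allows this balance to collapse to a single pair of exponents without additional parameter dependences appearing.
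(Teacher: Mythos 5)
Your overall architecture matches the paper's: Lemma~\ref{lm:FKS} for the pulse phase, a gauge transformation to reduce to the Coulomb evolution of the displaced wave packet, a Dollard-modified free dynamics compared via a Cook/Duhamel integral split at an intermediate time, and the dispersive estimate for the localization term. However, there are two genuine gaps. First, you cannot drop the Dollard phase. You argue that $|e^{-iX(p,s)}-1|$ is small because $X\sim (Z/\lambda)\log(\lambda s)$ on the momentum support, but the theorem must hold \emph{uniformly} in $t\geq T$ and only assumes $Z\leq\lambda$; as $s\to\infty$ the phase grows without bound, so $\|(e^{-iX(p,s)}-1)\tilde\Phi_T\|$ is $O(1)$, not small. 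The paper never removes the phase: it keeps the momentum-space multiplier $e^{iZ\int_0^{s}\rd\tau/|2\tau p-2\lambda T G(1+\tau/T)|}$ all the way to the end and proves separately (Lemma~\ref{lm:free}, a Gronwall estimate on $\langle x^2\rangle$ exploiting that the $p$-gradient of the phase is $O(Z/(t\lambda^2))$) that this multiplier does not spoil the position localization. Relatedly, the classical Dollard modifier $-\tfrac{Z}{2|p|}\log(4|p|^2 s)$ is not the right one here: it is singular at $s=0$ and is built on the free trajectory through the origin, whereas the wave packet sits at distance $\sim\lambda T$ from the Coulomb center at $s=0$; the paper's phase $Z\int_0^s \rd\tau\,|2\tau p-2\lambda T G(1)|^{-1}$ is adapted to the displaced center, is regular at $s=0$, and is what produces the exact cancellation of the leading terms of $|x-2\lambda TG(1)|^{-1}$ and $|2sp-2\lambda TG(1)|^{-1}$ in the Cook integrand that makes the tail integral converge.

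Second, your scheme is missing the momentum cutoff $\chi(p/K_0)$, which the paper introduces at the outset at the cost of an error $C(RK_0)^{-5/2}$. Since $\psi$ only has polynomial decay in momentum ($\gamma=4$), the assertion that $|p|\sim\lambda|F(1)|$ on the momentum support of $\tilde\Phi_T$ is false without the cutoff, and the lower bound $|2\tau p-2\lambda TG(1)|\geq C\lambda(T+\tau)$ (needed both to define the Dollard phase and to run the remainder estimates of Lemmas~\ref{lm:h} and~\ref{lm:g}) fails. The final exponent is obtained by optimizing over \emph{both} the intermediate time $t_0$ and the cutoff $K_0$ (the paper takes $K_0T/R\simeq (R\lambda)^{2/35}$), so attributing the $(R\lambda)^{-1/7}(ZT^{3/2}/R^2)^{4/7}$ rate to the choice of $s^*$ alone does not close the quantitative argument.
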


{\it Remark.} Just like Theorem \ref{thm:sr}, Theorem \ref{thm:cou} implies that \[\begin{split}  \lim\inf_{t \to \infty} \left\|  \wt{\chi}_{\delta,\theta} (t) \cU (t,0) \psi \right\| \geq &1- C \left(\frac{1}{R\lambda \tan\theta} + \frac{1}{R (C\lambda-\delta)}\right)  \\ &-  Z R \left[ 1 + \frac{R^{4}}{T^2} \right] \, \kappa_{\lambda}  - \frac{C}{(R\lambda)^{1/7}}  \left( \frac{Z T^{3/2}}{R^2} \right)^{4/7} \end{split} \]
where $\wt{\chi}_{\delta,\theta} (t) = {\bf 1} (|x| \geq t \delta) {\bf 1} (x\cdot F(1) \geq |x| |F(1)| \cos \theta)$. In other words, it is the vector $F(1)$ that determines, with probability approaching one, as $R\lambda \to \infty$, the direction in which the electron propagates asymptotically.

\begin{proof}
By Lemma \ref{lm:FKS} we have that
\begin{equation}\label{eq:cou0}
\| \chi_{\delta,\theta} (t) \, \cU(t,0) \psi \| \geq \| \chi_{\delta,\theta} (t) \, \cU(t,T) e^{-i\int_0^T \rd \tau (p-A(\tau))^2} \psi \| - C Z R \left[ 1 + \frac{R^{4}}{T^2} \right] \, \kappa_{\lambda}\, .
\end{equation}

In order to replace the unitary evolution $\cU(t,T)$ by a free evolution, we introduce, first of all, a cutoff in momentum space. We choose a smooth function $\chi \in C_0^{\infty} (\bR^3)$, with $\chi (x) = 0$ for all $|x| \geq 1$ and $\chi (x) = 1$ for all $|x| \leq 1/2$. We define $\bar{\chi} = 1 -\chi$. Then we have
\begin{equation}\label{eq:int-cou0}
\begin{split}
\| \chi_{\delta,\theta} (t) \, \cU(t,T) &e^{-i\int_0^T \rd \tau (p-A(\tau))^2} \psi \| \\ \geq \; & \| \chi_{\delta,\theta} (t) \, \cU(t,T) e^{-i\int_0^T \rd \tau (p-A(\tau))^2} \chi (p/K_0) \psi \| - \| \bar{\chi} (p/K_0) \psi \| \\ \geq \; & \| \chi_{\delta,\theta} (t) \, \cU(t,T) e^{-i\int_0^T \rd \tau (p-A(\tau))^2} \chi (p/K_0) \psi \| - \frac{C}{(RK_0)^{\gamma-3/2}} 
\end{split}
\end{equation}
for arbitrary $K_0 >0$; we will later optimize the choice of $K_0$. 
Next, we let $\psi_T = e^{-iTp^2} \chi (p/K_0) \psi$, and we observe that
\begin{equation}\label{eq:doll}
\begin{split}
\chi_{\delta,\theta} &(t) \, \cU(t,T) e^{-i\int_0^T \rd s (p-A(s))^2} \chi (p/K_0) \psi \\ = 
\; & \chi_{\delta,\theta} (t) e^{-i(t-T) \left[ (p-A(T))^2 - Z /|x| \right]} e^{2i\lambda T p \cdot G(1)} e^{-i\int_0^T \rd s \,A^2 (s)} \psi_T \\ = \; & e^{ix \cdot A(T)} \chi_{\delta,\theta} (t) \,  e^{-i(t-T) \left[ p^2 - Z/ |x| \right]} e^{-ix \cdot A(T)}  e^{2i\lambda T p \cdot G(1)} e^{-i\int_0^T \rd s \, A^2 (s)} \psi_T \\ = \; & e^{-i\int_0^T\rd s A^2 (s)} e^{2i \lambda T G(1) \cdot A(T)} \, e^{ix \cdot A(T)} \chi_{\delta,\theta} (t) \, e^{2i\lambda T p \cdot G(1)} e^{-i(t-T) \left[ p^2 -Z/ |x-2\lambda T G(1)| \right]} e^{-i x \cdot A(T)}  \psi_T 
\end{split}\end{equation}
and we write
\begin{equation}\label{eq:doll2}
\begin{split}
&e^{-i(t-T)  \left[ p^2 - Z / |x-2\lambda T G(1)| \right]}  e^{-i x \cdot A(T)} \psi_T 
\\ & \hspace{2cm} =  e^{-i(t-T)p^2} e^{i Z  \int_0^{t-T} \frac{\rd \tau}{|2 \tau p - 2 \lambda T G(1)|}} e^{-i x \cdot A(T)} \psi_T  \\ &\hspace{2.5cm} +\left[ e^{-i(t-T) \left[ p^2 - Z / |x-2\lambda TG(1)| \right]} -  e^{-i(t-T)p^2} e^{i Z  \int_0^{t-T} \frac{\rd \tau}{|2 \tau p - 2 \lambda T G(1)|}}\right] e^{-i x \cdot A(T)} \psi_T \,.
\end{split}
\end{equation}

Observe here that $\psi_T = \chi (p/K_0) e^{-iTp^2}\psi$ is supported, in momentum space, in the ball of radius $K_0$ around the origin. This implies that $|p+\lambda F(1)| \leq K_0$ for all $p$ in the support of the Fourier transform of $e^{-ix\cdot A(T)} \psi_T$. Therefore $|2\tau p - 2 \lambda T G(1)| \geq 2 \lambda T |G(1+\tau/T)| - 2 \tau K_0 \geq C \lambda  T +  (C \lambda -K_0) \tau$ for all $\tau \in [0, t-T]$. In particular, if we require that $K_0 \leq C\lambda/2$, the integral  $\int_0^{t-T} \rd \tau \, |2\tau p - 2 \lambda T G(1)|^{-1}$ is well defined (at the end, we will choose $K_0 R \simeq (\lambda R)^{2/35}$, and therefore the condition $K_0 \leq C \lambda/2$ is certainly satisfied for sufficiently large values of $(\lambda R)$). It follows that
\begin{equation}\label{eq:cou-int0b} \begin{split}
\| \chi_{\delta,\theta} (t) \, &\cU(t,T) e^{i\int_0^T \rd s (p-A(s))^2} \chi (p/K_0) \psi \| \\ \geq\; & \| \chi_{\delta,\theta} (t) \, e^{2i\lambda T p \cdot G(1)} e^{-i(t-T) p^2} e^{i Z  \int_0^{t-T} \frac{\rd \tau}{|2\tau p -2\lambda T G(1)|} } e^{-i x \cdot A(T)} \psi_T \| \\ &-  \left\|  \left[ e^{-i(t-T) [p^2 - Z/ |x-2\lambda T G(1)|]} - e^{-i(t-T) p^2} e^{i Z \int_0^{t-T} \frac{\rd \tau}{|2\tau p -2\lambda T G(1)|} } \right] e^{-i x \cdot A(T)} \psi_T \right\| \,. \end{split}\end{equation}
To bound the first term, we observe that
\begin{equation}\label{eq:doll3} \begin{split}
 \| \chi_{\delta,\theta} (t) \, e^{2i\lambda T p \cdot G(1)} &e^{-i(t-T) p^2} e^{i Z  \int_0^{t-T} \frac{\rd \tau}{|2\tau p - 2 \lambda T G(1)|} } e^{-i x \cdot A(T)} \psi_T \| 
 \\ =\; &  \| \chi_{\delta,\theta} (t) \, e^{2i\lambda T (p-A(T)) \cdot G(1)} e^{-i(t-T) (p-A(T))^2} e^{i Z  \int_0^{t-T} \frac{\rd \tau}{|2\tau (p-A(T)) - 2 \lambda T G(1)|} } \psi_T \| \\ = \;& \| \chi_{\delta,\theta} (t) \, e^{2i \lambda T p \cdot G(t/T)} e^{-i t p^2} e^{i Z  \int_0^{t-T} \frac{\rd \tau}{|2\tau p - 2 \lambda T G(1+\tau/T)|} } \chi (p/K_0) \psi \|  \,.
\end{split}\end{equation}

\medskip

%

{F}rom (\ref{eq:doll3}), we obtain
\[ \begin{split} 
  \| \chi_{\delta,\theta} (t) \, e^{2i\lambda T p \cdot G(1)} &e^{-i(t-T) p^2} e^{i Z  \int_0^{t-T} \frac{\rd \tau}{|2\tau p - 2 \lambda T G(1)|} } e^{-i x \cdot A(T)} \psi_T \| 
  \\ = \;& 
\|  {\bf 1} ((x-2\lambda T G(t/T)) \cdot G(t/T) \geq |x-2\lambda T G(t/T)| |G(t/T)| \cos \theta)  \\ &\times {\bf 1} (|x-2\lambda T G(t/T)| \geq \delta t) e^{-i t p^2} e^{i Z  \int_0^{t-T} \frac{\rd \tau}{|2\tau p -2\lambda T G(1+\tau/T)|}} \chi (p / K_0) \psi \| \\ \geq \; & \| \chi (p /K_0) \psi \| \\ &- \| {\bf 1} (|x-2\lambda T G(t/T)| \leq \delta t) e^{-i t p^2} e^{i Z  \int_0^{t-T} \frac{\rd \tau}{|2\tau p -2\lambda T G(1+\tau/T)|}} \chi (p/ K_0) \psi \| \\ &- \| {\bf 1} ((x-2\lambda T G(t/T)) \cdot G(t/T) \leq |x-2\lambda G(t/T)| |G(t/T)| \cos \theta) \\ &\hspace{2cm} \times  e^{-i t p^2} e^{i Z  \int_0^{t-T} \frac{\rd \tau}{|2\tau p -2\lambda T G(1+\tau/T)|}} \chi (p /K_0) \psi \|
\\ \geq \; & 1 - (R K_0)^{-5/2} \\ &- \| {\bf 1} (|x| \geq (C\lambda -\delta) t) \, e^{-i t p^2} e^{i Z  \int_0^{t-T} \frac{\rd \tau}{|2\tau p -2\lambda T G(1+\tau/T)|}} \chi (p / K_0) \psi \| \\ &- \| {\bf 1} (|x| \geq C \lambda t \tan \theta)  e^{-i t p^2} e^{i Z  \int_0^{t-T} \frac{\rd \tau}{|2\tau p -2\lambda T G(1+\tau/T)|}} \chi (p /K_0) \psi \| \,.
\end{split}\]
{F}rom Lemma \ref{lm:free}, below, we find that 
\begin{equation}\label{eq:fr} \begin{split}
 \| \chi_{\delta,\theta} (t) \, e^{2i\lambda T p \cdot G(1)} &e^{-i(t-T) p^2} e^{i Z  \int_0^{t-T} \frac{\rd \tau}{|2\tau p -2\lambda T G(1)|} } e^{-i x \cdot A(T)} \psi_T \|  \\  \geq \; & 1 - (R K_0)^{-5/2} - C \left(\frac{1}{R\lambda \tan\theta} + \frac{1}{R (C\lambda-\delta)}\right) \left(1+\frac{R^2}{t} \right) \,. \end{split}\end{equation}

As for the second term on the r.h.s. of (\ref{eq:cou-int0b}), we use the bound
\begin{equation}\label{eq:cou-int} \begin{split} 
&\Big\|  \Big[ e^{-i(t-T) [p^2 - Z /|x-2\lambda T G(1)|]} - e^{-i(t-T) p^2} e^{i Z  \int_0^{t-T} \frac{\rd \tau}{|2\tau p -2\lambda T G(1)|} } \Big] e^{-i x \cdot A(T)} \psi_T \Big\| \\ &\hspace{.5cm} \leq Z \int_0^{t-T} \rd s \,  \left\| \left[ \frac{1}{|x-2\lambda T G(1)|} - \frac{1}{|2sp - 2\lambda T G(1)|} \right] e^{-isp^2} e^{i \int_0^s \frac{\rd \tau}{|2\tau p - 2 \lambda T G(1)|}} e^{-i x \cdot A(T)} \psi_T \right\| \, .\end{split}\end{equation}
We first handle small values of $s \in [0,t-T]$. To this end, we observe that
\[ \begin{split} \Big\| \frac{1}{|x-2\lambda T G(1)|} e^{-isp^2} & e^{i Z  \int_0^s \frac{\rd \tau}{|2\tau p - 2 \lambda T G(1)|}} e^{-i x \cdot A(T)} \psi_T \Big\|^2 \\ = \; & \left\| \frac{1}{|x-2\lambda T G(1)|}  e^{-i x \cdot A(T)} \, e^{-is(p-A(T))^2} e^{iZ  \int_0^s \frac{\rd \tau}{|2\tau p - 2 \lambda T G(1+\tau/T)|}} \psi_T \right\|^2 \\ \leq \; &\int \frac{\rd x}{|x-2\lambda T G(1)|^2} \left| e^{-is(p-A(T))^2} e^{i Z \int_0^s \frac{\rd \tau}{|2\tau p - 2 \lambda T G(1+\tau/T)|}} \psi_T (x) \right|^2 \\ \leq \; &4 \int_{|p| \leq K_0} \rd p \, |p|^2 \, | \wh{\psi} (p)|^2 \leq C R^{-2} 
\end{split} \]
using (\ref{eq:psi2}), with $\gamma=4$. 
On the other hand, we have that
\[ \begin{split} \left\|  \frac{1}{|2sp - 2\lambda T G(1)|} e^{-isp^2} e^{i Z \int_0^s \frac{\rd \tau}{|2\tau p - 2 \lambda T G(1)|}} e^{-i x \cdot A(T)} \psi_T \right\| &=  \left\|  \frac{1}{|2sp - 2\lambda T G(1+s/T)|}  \chi (p/K_0) \psi \right\| \\ &\leq \frac{1}{2\lambda T |G(1+s/T)| -  s K_0} \\
&\leq \frac{1}{C \lambda T}  \end{split} \]
for all $s \in [0,t]$, if $K_0 <C\lambda/2$; here we used the assumption (\ref{eq:ass2}). Therefore
\[ \begin{split} \Big\| \left[ \frac{1}{|x-2\lambda G(T)|} -  \frac{1}{|2sp -
2\lambda G(T)|} \right] \, e^{-isp^2}  &e^{i Z  \int_0^s \frac{\rd
\tau}{|2\tau p - 2 \lambda T G(1)|}} e^{-i x \cdot A(T)} \psi_T \Big\| \\
&\hspace{1cm} \leq C \left(\frac{1}{R}  
+ \frac{1}{\lambda T}\right) \leq C R^{-1} \end{split} \] 
assuming $\lambda T \geq R$. In conclusion
\begin{equation}\label{eq:cou-int2} \begin{split}  \int_0^{t-T} &\rd s \,  \left\| \left[ \frac{1}{|x-2\lambda T G(1)|} - \frac{1}{|2sp - 2\lambda T G(1)|} \right] e^{-isp^2} e^{i Z \int_0^s
\frac{\rd \tau}{|2\tau p - 2 \lambda T G(1)|}} e^{-i x \cdot A(T)} \psi_T \right\|
\\ \leq \; & \frac{C t_0}{R} + \int_{t_0}^{t-T}  \rd s \,  \left\| \left[
\frac{1}{|x-2\lambda T G(1)|} - \frac{1}{|2sp - 2\lambda T G(1)|} \right] e^{-isp^2}
e^{i Z \int_0^s \frac{\rd \tau}{|2\tau p - 2 \lambda T G(1)|}} e^{-i x \cdot
A(T)} \psi_T \right\| \,. \end{split}\end{equation}
To estimate the second term, we use the kernel representation
\[ (e^{-isp^2} \psi )(x) = \frac{1}{(4\pi i s)^{3/2}} \int \rd y \, e^{i(x-y)^2/4s} \psi (y) \]
implying that
\[ \begin{split}
\frac{1}{|x-2\lambda T G(1)|} & \left( e^{-isp^2}  e^{i Z  \int_0^s \frac{\rd \tau}{|\tau p - 2 \lambda T G(1)|}} e^{-i x \cdot A(T)} \psi_T \right) (x) \\ = \; & \frac{e^{ix^2/4s}}{(4\pi is)^{3/2}|x-2\lambda T G(1)|} e^{i Z   \int_0^s \frac{\rd \tau}{|\tau x/s - 2 \lambda T G(1)|}} \wh{\psi}_T (x/2s + \lambda F(1)) + R^{(1)}_\lambda (s,x) \, , \end{split}
\] with \[ \begin{split} R^{(1)}_\lambda (s,x) = \; &\frac{e^{ix^2/4s}}{(4\pi i s)^{3/2} \, |x-2\lambda T G(1)|}   \\ &\times \int \rd y \, e^{-iy \cdot x/2s} \left( e^{iy^2/4s} - 1 \right) \left( e^{i Z  \int_0^s \frac{\rd \tau}{|2\tau p - 2 \lambda T G(1)|}} e^{-i x \cdot A(T)} \psi_T \right)(y)  \,. \end{split} \]
Similarly, we notice that
\[ \begin{split} \frac{1}{|2sp - 2 \lambda T G(1)|} &e^{-isp^2}  e^{i Z   \int_0^s \frac{\rd \tau}{|2\tau p - 2 \lambda G(T)|}} e^{-i x \cdot A(T)} \psi_T (x) \\ &= \frac{e^{ix^2/4s}}{(4\pi is)^{3/2}} \frac{1}{|x-2\lambda T G(1)|} e^{i Z  \int_0^s \frac{\rd \tau}{|\tau x/s - 2\lambda T G(1)|}} \wh{\psi}_T (x/2s + \lambda F(1)) + R^{(2)}_{\lambda} (s,x), \end{split} \]
with \[ \begin{split} R^{(2)}_\lambda (s,x) =   \frac{e^{ix^2/4s}}{(4\pi is)^{3/2}}  \int \rd y \, &e^{-iy \cdot x/2s} \left( e^{iy^2/4s} - 1 \right) \\ &\times \left( \frac{1}{|2sp - 2\lambda T G(1)|} \, e^{i Z  \int_0^s \frac{\rd \tau}{|2\tau p - 2 \lambda T G(1)|}} e^{-i x \cdot A(T)} \psi_T \right)(y) \,. \end{split} \] {F}rom (\ref{eq:cou-int}), we find that
\begin{equation} \label{eq:RR}
\begin{split} 
\int_{t_0}^{t-T}  \rd s \,  \Big\| \Big[
\frac{1}{|x-2\lambda T G(1)|} - \frac{1}{|sp - 2\lambda T G(1)|} \Big] e^{-isp^2}
&e^{i Z \int_0^s \frac{\rd \tau}{|2\tau p - 2 T \lambda G(1)|}} e^{-i x \cdot
A(T)} \psi_T \Big\| 
 \\ &\leq \int_{t_0}^{t-T} \rd s \, \left( \| R^{(1)}_\lambda (s,x) \| + \| R^{(2)}_\lambda (s,x) \| \right)\,.
\end{split} \end{equation}
To control the first remainder term, we compute \[ \begin{split}
\left( e^{i Z   \int_0^s \frac{\rd \tau}{|2\tau p - 2 \lambda T G(1)|}} e^{-i x \cdot A(T)} \psi_T \right)(y) = &\; \frac{1}{(2\pi)^{3/2}} \int \rd k \, e^{ik\cdot y} e^{i Z   \int_0^s \frac{\rd \tau}{|2\tau k - 2 \lambda T  G(1)|}} \wh{\psi_T} (k + \lambda F(1)) \\ = &\; \frac{e^{-i\lambda F(1) \cdot y}}{(2\pi)^{3/2}} \int \rd k \, e^{ik\cdot y} e^{i Z  \int_0^s \frac{\rd \tau}{|2\tau k - 2 \lambda T G(1+\tau/T)|}} \wh{\psi_T} (k) \,. \end{split}\] Hence
\[ 
R^{(1)}_\lambda (s,x) = \frac{e^{ix^2/4s}}{(8\pi^2 i s)^{3/2} |x- 2\lambda T G(1)|} \int \rd y \, e^{-i y \cdot (x/2s + \lambda F(1))} \, (e^{iy^2/4s} - 1) \, h_\lambda (s,y) \]
with
\[ h_\lambda (s,y) =  \int \rd k \, e^{ik\cdot y} e^{i Z  \int_0^s \frac{\rd \tau}{|2\tau k - 2 \lambda T G(1+\tau/T)|}} \wh{\psi_T} (k) \,. \] 
In Lemma \ref{lm:h}, below, we show that, for every multi-index $\beta \in \bN^3$,
\[  \left| D^{\beta}_x h_\lambda (s,x) \right| \leq
\frac{ C R^{-3/2}  K_0^{|\beta|}}{1+(x/R)^{2n}} \,\left(\frac{K_0 T}{R}\right)^{2n} \left[ 1 + \frac{Z}{\lambda} \log^{2n} (1+ s/T) \right] \,. \]
Therefore, on the one hand,
\begin{equation}\label{eq:R11} \begin{split} |R^{(1)}_\lambda (s,x)| \leq \;& \frac{C}{s^{5/2} |x-2\lambda G(T)|} \int \rd y \, |y|^2 \, |h_\lambda (s,y)| \\ \leq \; &
\frac{C R^{7/2}}{s^{5/2}\, |x-2\lambda G(T)|} \left( \frac{TK_0}{R} \right)^{2n} \left[ 1 + \frac{Z}{\lambda} \log^{2n} (1+s/T) \right] \end{split} \end{equation}
for all $n > 5/2$. On the other hand, from 
\[ \begin{split} 
R^{(1)}_{\lambda} (s,x)  = \; &\frac{e^{ix^2/4s}}{(8\pi^2 i s)^{3/2} |x-2\lambda T G(1)|} \int \rd y \, \frac{\Delta^m_y e^{-iy \cdot (x/2s + \lambda F(1))}}{(-1)^m \, |x/2s + \lambda F(1)|^{2m}} (e^{iy^2/4s} - 1) h_\lambda (s,y)
\end{split} \]
we find by integrating by parts that
\[ \begin{split}
| R^{(1)}_{\lambda} (s,x) | \leq \; &\frac{C}{s^{3/2} |x-2\lambda T G(1)| |x/2s +\lambda F(1)|^{2m}} \\ &\times \sum_{|\alpha|+|\beta| = 2m} \int \rd y \, \left| D^{\alpha} (e^{iy^2/4s} - 1) \right| \, |D^{\beta} h_\lambda (s,y)| \,.
\end{split} \]
Using that \begin{equation}\label{eq:Dal-pari} |D^{\alpha} (e^{iy^2/4s} - 1)| \leq \frac{C}{s^r} \, \left(1 + \frac{|y|^{2r}}{s^r}\right) \end{equation} if $|\alpha| = 2r$, $r \geq 1$, and that  
\begin{equation}\label{eq:Dal-dis} |D^{\alpha} (e^{iy^2/4s} - 1)| \leq \frac{C|y|}{s^r} \left(1 + \frac{|y|^{2(r-1)}}{s^{r-1}}\right) \end{equation} if $|\alpha| = 2r-1$, $r \geq 1$, we arrive at 
\[ \begin{split}
| R^{(1)}_{\lambda} (s,x)|  \leq\; &\frac{CR^{-3/2} K_0^{2m}}{s^{3/2} |x-2\lambda T G(1)| |x/2s +\lambda F(1)|^{2m}} \left( \frac{K_0T}{R} \right)^{2n} \left[ 1 + \frac{Z}{\lambda} \log^{2n} (1+s/T)\right] \\ &\times \left\{  \int \rd y \, \frac{|y|^2}{s} \, \frac{1}{1+(|y| / R)^{2n}} +
\sum_{r=1}^m \frac{1}{(K_0^2 s)^r} \int \rd y \, \left( 1+ \frac{y^{2r}}{s^r} \right) \frac{1}{1+(|y|/R)^{2n}} \right. \\ 
&\left. \hspace{1cm}+ 
\sum_{r=1}^m \frac{1}{(K_0^2 s)^r} \int \rd y \,  (K_0 |y|) \, \left( 1+ \frac{y^{2(r-1)}}{s^{r-1}} \right) \frac{1}{1+(|y|/R)^{2n}} \right\},
\end{split} \]
where the first term in the parenthesis corresponds to $|\alpha| = 0$, the second to $|\alpha| = 2r$ and the third to $|\alpha| = 2r-1$. It follows that
\begin{equation}\label{eq:R12} \begin{split}
| R^{(1)}_{\lambda} (s,x)|  \leq &\; \frac{CR^{3/2} K_0^{2m}}{s^{3/2} |x-2\lambda T G(1)| |x/2s +\lambda F(1)|^{2m}} \left( \frac{K_0T}{R} \right)^{2n} \left[ 1 + \frac{Z}{\lambda}  \log^{2n} (1+s/T)\right] \\ &\times \left\{ \frac{R^2}{s} + \frac{R^2}{s} \sum_{r=1}^m \frac{1}{(RK_0)^{2r-1}} \left(1 + \left(\frac{R^2}{s}\right)^{2r} \right)  \right\} \\
\leq &\; \frac{CR^{7/2} K_0^{2m}}{s^{5/2} |x-2\lambda T G(1)| |x/2s +\lambda F(1)|^{2m}} \left( \frac{K_0 T}{R} \right)^{2n} \left[ 1 + \frac{Z}{\lambda} \log^{2n} (1+s/T)\right]   \\ &\times \left(1 + \left(\frac{R^2}{s}\right)^{2m} \right) 
\end{split}\end{equation}
for all $n > m +3/2$, and all $m \geq 1$. 
Combining this bound with (\ref{eq:R11}), we find that
\[ \begin{split}
| R^{(1)}_{\lambda} (s,x)|  \leq &\; \frac{CR^{7/2}}{s^{5/2} |x-2\lambda T G(1)| (1 + (|x/2s +\lambda F(1)| / K_0)^{2m})} \left( \frac{K_0 T}{R} \right)^{2n}   \\ &\times  \left[ 1 + \frac{Z}{\lambda} \log^{2n} (1+s/T)\right]  \left(1 + \left(\frac{R^2}{s}\right)^{2m} \right) 
\end{split}\]
for all $n > m +3/2$, and all $m \geq 1$. 
We thus conclude that
\begin{equation}
\begin{split}
\| R^{(1)}_\lambda (s,x) \| \leq \; &\frac{CR^{7/2}}{s^{5/2}} \left( \frac{K_0 T}{R} \right)^{2n} 
\left[ 1 + \frac{Z}{\lambda} \log^{2n} (1+s/T)\right]  \left(1 + \left(\frac{R^2}{s}\right)^{2m} \right) \\ &\times \left( \int \frac{\rd x}{|x-2\lambda T G(1)|^2 (1 + (|x/2s +\lambda F(1)| / K_0)^{2m})^2} \right)^{1/2}  \,.
\end{split}\end{equation}
Since \[ \begin{split} 
 \int \frac{\rd x}{|x-2\lambda T G(1)|^2} & \, \frac{1}{(1 + (|x/2s +\lambda F(1)| / K_0)^{2m})^2} \\ &= 2sK_0 \int \frac{\rd x}{|x- \frac{\lambda T G(1+s/T)}{sK_0}|^2 \, (1+x^{2m})^2} \leq \frac{C (sK_0)^3}{\lambda^2 T^2 |G(1+s/T)|^2} \end{split} \]
we find that
\begin{equation}
\begin{split}
\| R^{(1)}_\lambda (s,x) \|
\leq \; &\frac{CR^{7/2} K_0^{3/2}}{\lambda s T |G(1+s/T)|} \left( \frac{K_0 T}{R} \right)^{2n} 
\left[ 1 + \frac{Z}{\lambda} \log^{2n} (1+s/T)\right]  \left(1 + \left(\frac{R^2}{s}\right)^{2m} \right) 
\end{split}\end{equation}
for any $m \geq 1$. Since $|G(1+s/2T)| \geq C (1+s/2T)$, we find
\begin{equation}\label{eq:R13} \begin{split} 
\int_{t_0}^{t-T} \rd s\, & \| R^{(1)}_\lambda (s,x )\| \\   \leq 
\, & 
\frac{CR^{7/2} K_0^{3/2}}{\lambda T} \left( \frac{K_0 T}{R} \right)^{2n}  \int_{t_0/T}^{\infty} \frac{\rd s}{s (1+s)}
\left[ 1 + \frac{Z}{\lambda} \log^{2n} (1+s)\right]  \left(1 + \left(\frac{R^2}{T}\right)^{2m} \frac{1}{s^{2m}} \right) 
\\ \leq 
\, & 
\frac{C(K_0 R)^{3/2}}{\lambda} \left( \frac{K_0 T}{R} \right)^{2n}  \frac{R^2}{t_0} \left(1+ \left(\frac{R^2}{t_0} \right)^{2m} \right) \left( 1 + \frac{Z}{\lambda} \left(\frac{t_0}{T}\right)^{\eps} \right) 
\end{split}\end{equation}
for any $m \geq 1$ and $n >m+3/2$, and any $\eps >0$.

\medskip

To control the second remainder term on the r.h.s. of (\ref{eq:RR}), we  write 
\[ \begin{split} \Big( \frac{1}{|2sp - 2\lambda TG(1)|} \, &e^{i Z  \int_0^s \frac{\rd \tau}{|2\tau p - 2 \lambda T G(1)|}}  e^{-i x \cdot A(T)} \psi_T \Big)(y) \\ =& \, \frac{1}{(2\pi)^{3/2}} \int \frac{\rd k}{|2sk - 2\lambda T G(1)|}  \,  e^{ik\cdot y} e^{i Z   \int_0^s \frac{\rd \tau}{|2\tau k - 2 \lambda T G(1)|}} \wh{\psi_T} (k + \lambda F(1)) \\ = &\; \frac{e^{-i\lambda F(1) \cdot y}}{(2\pi)^{3/2}} \int \frac{\rd k}{|2sk - 2\lambda T G(1+s/T)|} \, e^{ik\cdot y} e^{i Z  \int_0^s \frac{\rd \tau}{|2\tau k - 2 \lambda T G(1+\tau/T)|}} \wh{\psi_T} (k) \,. \end{split}\] 
Hence
\[ 
R^{(2)}_\lambda (s,x) = \frac{e^{ix^2/4s}}{(8\pi^2 i s)^{3/2}} \int \rd y \, e^{-i y \cdot (x/2s + \lambda F(1))} \, (e^{iy^2/4s} - 1) \, g_\lambda (s,y) \]
with
\[ g_\lambda (s,y) =  \int \frac{\rd k}{|2sk - 2\lambda T G(1)|} \, e^{ik\cdot y} e^{i Z  \int_0^s \frac{\rd \tau}{|2\tau k - 2 \lambda T G(1+\tau/T)|}} \wh{\psi_T} (k) \,. \] 
In Lemma \ref{lm:g}, we show that for every multi-index $\beta \in \bN^3$,
\[ \left| D^{\beta}_x g_\lambda (s,x) \right| 
\leq \frac{C}{\lambda (T+s)}
\frac{R^{-3/2}  K_0^{|\beta|}}{1+(x/R)^{2n}} \,\left(\frac{K_0 T}{R}\right)^{2n} \left[ 1 + \frac{Z}{\lambda} \log^{2n} (1+ s/T) \right] \,. \] 

Therefore, on the one hand
\begin{equation}\label{eq:R21} \begin{split} |R^{(2)}_\lambda (s,x)| \leq \;& \frac{C}{s^{5/2}} \int \rd y  \, |y|^2 \, |g_\lambda (s,y)| \\ \leq \; &
\frac{C R^{7/2}}{\lambda \, s^{5/2}\, (s+T)} \left( \frac{TK_0}{R} \right)^{2n} \left[ 1 + \frac{Z}{\lambda} \log^{2n} (1+s/T) \right] \end{split} \end{equation}
for all $n > 3/2$. On the other hand, from 
\[ \begin{split} 
R^{(2)}_{\lambda} (s,x)  = \; &\frac{e^{ix^2/4s}}{(8\pi2 is)^{3/2}} \int \rd y \, \frac{\Delta^m_y e^{-iy \cdot (x/2s + \lambda F(1))}}{(-1)^m \, |x/2s + \lambda F(1)|^{2m}} (e^{iy^2/4s} - 1) g_\lambda (s,y)
\end{split} \]
we find by integrating by parts that
\[ \begin{split}
| R^{(2)}_{\lambda} (s,x) | \leq \; &\frac{C}{s^{3/2} |x/2s +\lambda F(1)|^{2m}} \sum_{|\alpha|+|\beta| = 2m} \int \rd y \, \left| D^{\alpha} (e^{iy^2/4s} - 1) \right| \, |D^{\beta} g_\lambda (s,y)| \,.
\end{split} \]
Using the bounds (\ref{eq:Dal-pari}), (\ref{eq:Dal-dis}), we obtain, similarly to (\ref{eq:R12}), the bound
\[ \begin{split}
| R^{(2)}_{\lambda} (s,x)| 
\leq &\; \frac{CR^{7/2} K_0^{2m}}{\lambda  \, s^{5/2} \, (s+T) |x/2s +\lambda F(1)|^{2m}} \left( \frac{K_0 T}{R} \right)^{2n} \left[ 1 + \frac{Z}{\lambda} \log^{2n} (1+s/T)\right]   \\ &\times \left(1 + \left(\frac{R^2}{s}\right)^{2m} \right) 
\end{split}\]
for all $n > m +3/2$, and all $m \geq 1$. Combining the last bound with (\ref{eq:R21}), we conclude that
\[ \begin{split}
| R^{(2)}_{\lambda} (s,x)| 
\leq &\; \frac{CR^{7/2}}{\lambda  \, s^{5/2} \, (s+T) \, (1 + (|x/2s +\lambda F(1)|/ K_0)^{2m})} \left( \frac{K_0 T}{R} \right)^{2n} \left[ 1 + \frac{Z}{\lambda} \log^{2n} (1+s/T)\right]   \\ &\times \left(1 + \left(\frac{R^2}{s}\right)^{2m} \right) \,.
\end{split}\]
Hence we have
\begin{equation}
\begin{split}
\| R^{(2)}_\lambda (s,x) \| \leq \; &\frac{CR^{7/2}}{\lambda \, s^{5/2} (s+T)} \left( \frac{K_0 T}{R} \right)^{2n} 
\left[ 1 + \frac{Z}{\lambda} \log^{2n} (1+s/T)\right]  \left(1 + \left(\frac{R^2}{s}\right)^{2m} \right) \\ &\times \left( \int \frac{\rd x}{(1 + (|x/2s +\lambda F(1)| / K_0)^{2m})^2} \right)^{1/2}  \\
\leq \; &\frac{CR^{7/2} K_0^{3/2}}{\lambda s (T+s)} \left( \frac{K_0 T}{R} \right)^{2n} 
\left[ 1 + \frac{Z}{\lambda} \log^{2n} (1+s/T)\right]  \left(1 + \left(\frac{R^2}{s}\right)^{2m} \right) 
\end{split}\end{equation}
for all $m \geq 1$. Similarly to (\ref{eq:R13}), this implies that  
\begin{equation}\label{eq:R23} \begin{split} 
\int_{t_0}^{t-T} \rd s\,  \| R^{(2)}_\lambda (s,x )\| \leq 
\, & 
\frac{C(K_0 R)^{3/2}}{\lambda} \left( \frac{K_0 T}{R} \right)^{2n}  \frac{R^2}{t_0} \left(1+ \left(\frac{R^2}{t_0} \right)^{2m} \right) \left( 1 + \frac{Z}{\lambda} \left(\frac{t_0}{T}\right)^{\eps} \right) ,
\end{split}\end{equation}
for any $m \geq 1$, $n >m+3/2$, and $\eps >0$.

\medskip

{F}rom (\ref{eq:cou-int}), (\ref{eq:cou-int2}), (\ref{eq:R13}), (\ref{eq:R23}), we find that
\[ \begin{split} 
\Big\|  \Big[ &e^{-i(t-T) [p^2 - Z /|x-2\lambda TG(1)|]} - e^{-i(t-T) p^2} e^{i Z  \int_0^{t-T} \frac{\rd \tau}{|\tau p -2\lambda T G(1)|} } \Big] e^{-i x \cdot A(T)} \psi_T \Big\| \\
\leq \; & \frac{C Z t_0}{R} + C \frac{Z}{\lambda}
 \, (K_0 R)^{3/2} \left( \frac{K_0 T}{R} \right)^{2n}  \frac{R^2}{t_0} \left(1+ \left(\frac{R^2}{t_0} \right)^{2m} \right) \left( 1 + \frac{Z}{\lambda} \left(\frac{t_0}{T}\right)^{\eps} \right) 
\end{split}\] 
for all $m \geq 1$, $n >m+3/2$, and $\eps >0$. We now choose $m=1$ and $n=3$, and we set \[ \frac{t_0}{R^2} = \left( \frac{(K_0R)^{3/2} \left( \frac{K_0 T}{R} \right)^6}{R\lambda} \right)^{1/4} \,. \] 
We will choose $K_0$ so that $K_0 R$ and $K_0 T/R$ are large in the limit of large $(R\lambda)$ so that we may assume that $t_0/R^2 \leq 1$, $(t_0/T)^{\eps} \leq \lambda / (Z)$.  Then
\[ \begin{split} 
\Big\|  \Big[ e^{-i(t-T) [p^2 - Z /|x-2\lambda T G(1)|]} - e^{-i(t-T) p^2} e^{i Z  \int_0^{t-T} \frac{\rd \tau}{|\tau p -2\lambda T G(1)|} } \Big] &e^{-i x \cdot A(T)} \psi_T \Big\| \\
\leq \; & C Z R \left( \frac{(K_0R)^{3/2} \left( \frac{K_0 T}{R} \right)^6}{R\lambda} \right)^{1/4} \,. \end{split}\]

This last bound, together with (\ref{eq:fr}), implies that
\[ \begin{split} 
\| \chi_{\delta,\theta} (t) \, \cU(t,T) e^{-i\int_0^T \rd \tau (p-A(\tau))^2} \psi \| \geq \; &
1 - \frac{C}{R \lambda} \left(1+\frac{R^2}{t} \right) - C (RK_0)^{-5/2} \\ &- C Z R \left( \frac{(K_0R)^{3/2} \left( \frac{K_0 T}{R} \right)^6}{R\lambda} \right)^{1/4} \,. \end{split}\]
We finally optimize our choice of $K_0$. To this end, we set $(K_0 T/R) = C_0 (R \lambda)^{2/35}$, for an appropriate constant $C_0$. This yields 
\[ \begin{split} 
\| \chi_{\delta,\theta} (t) \, \cU(t,T) e^{-i\int_0^T \rd \tau (p-A(\tau))^2} \psi \| \geq \; &
1 - \frac{C}{R \lambda} \left(1+\frac{R^2}{t} \right) - \frac{C}{(R\lambda)^{1/7}}  \left( \frac{Z T^{3/2}}{R^2} \right)^{4/7} \, , \end{split}\]
which, combined with (\ref{eq:cou0}), concludes the proof of the theorem.
\end{proof}

\begin{lemma}\label{lm:free}
Suppose that $\chi \in C_0^{\infty} (\bR^3)$, with $\chi (x) = 0$ for all $|x| \geq 1$ and $\chi (x) = 1$ for all $|x| \leq 1/2$. Assume that $C^{-1} \leq R^2 /T \leq C$, $Z \leq \lambda$, $K_0 \leq C \lambda$ for an appropriate constant $C$, and that 
$\lambda R \geq 1$ is large enough (at the end $(K_0R) \simeq (\lambda R)^{2/35}$, and therefore the condition $K_0 \leq C \lambda$ is satisfied for sufficiently large $(R\lambda)$). Then, for every $t \geq T$, and for every constant $D>0$, we have that  
\begin{equation}\label{eq:lmfree} \begin{split} 
\| {\bf 1} (|x| \geq D t) \, e^{-i t p^2} e^{i Z  \int_0^{t-T} \frac{\rd \tau}{|2\tau p -2\lambda T G(1+\tau/T)|} } \chi (p / K_0) \psi \| \leq \frac{C}{D R} \left(1+\frac{R^2}{t}\right) \,.\end{split}\end{equation}
\end{lemma}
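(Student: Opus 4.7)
The plan is to adapt the Chebyshev-type argument used at the end of the proof of Theorem~\ref{thm:sr} to accommodate the additional Dollard phase. Writing $\phi = e^{iS(p)}\chi(p/K_0)\psi$ with $S(p) = Z\int_0^{t-T} |2\tau p - 2\lambda T G(1+\tau/T)|^{-1}\,\rd\tau$, Chebyshev's inequality together with the Heisenberg identity $e^{itp^2}\,x\,e^{-itp^2} = x + 2tp$ and the operator bound $(x+2tp)^2 \le 2x^2 + 8t^2 p^2$ reduce the left-hand side of (\ref{eq:lmfree}) to controlling $\|x\phi\|^2$ and $\|p\phi\|^2$. Because $p$ commutes both with $e^{iS(p)}$ and with $\chi(p/K_0)$, the momentum term is immediate: $\|p\phi\|^2 \le \|p\wh\psi\|^2 \le CR^{-2}$ by (\ref{eq:psi1}).

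For the position term I would pass to the Fourier side, where $x$ acts as $i\nabla_p$, and apply the product rule to $\wh\phi(p) = e^{iS(p)}\chi(p/K_0)\wh\psi(p)$. This yields three summands: (i) $\|\chi(p/K_0)\nabla\wh\psi\|^2 \le \|x\psi\|^2 \le CR^2$ by (\ref{eq:psi1}); (ii) $\|K_0^{-1}(\nabla\chi)(p/K_0)\wh\psi\|^2 \le CK_0^{-2}$, which is harmless because the later choice $K_0 R \simeq (R\lambda)^{2/35}$ gives $K_0^{-2} \le CR^2$; and (iii) the Dollard-gradient term $\|(\nabla_p S)\,\chi(p/K_0)\wh\psi\|^2$, which I expect to be the main obstacle.

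To dispose of (iii), I would differentiate under the integral sign to obtain $|\nabla_p S(p)| \le 2Z\int_0^{t-T} \tau\,|2\tau p - 2\lambda T G(1+\tau/T)|^{-2}\,\rd\tau$. For $|p|\le K_0$ with $K_0 \le C\lambda$ (an assumption of the lemma), condition (\ref{eq:ass2}) yields $|2\tau p - 2\lambda T G(1+\tau/T)| \ge C\lambda(T+\tau)$; substituting $u = T+\tau$ then bounds the integral by $C\lambda^{-2}\log(t/T)$, so $\|\nabla_p S\|_{L^\infty(|p|\le K_0)} \le CZ\lambda^{-2}\log(t/T)$. Dividing by $(Dt)^2$ produces a contribution of size $CZ^2\log^2(t/T)/((Dt)^2\lambda^4)$, and invoking $Z\le\lambda$, the elementary estimate $\log(t/T)/t \le 1/(eT)$ valid for $t\ge T$, and the hypothesis $R^2 \simeq T$, this term reduces to $C/((DR)^2 (R\lambda)^2)$, which is subdominant for large $R\lambda$.

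Collecting the three contributions and combining with the $p$-term gives $\|{\bf 1}(|x|\ge Dt)\,e^{-itp^2}\phi\|^2 \le C(DR)^{-2}(1 + R^4/t^2)$. Taking square roots and using $\sqrt{1+R^4/t^2} \le 1 + R^2/t$ yields the stated bound (\ref{eq:lmfree}).
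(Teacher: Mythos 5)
Your argument is correct and reaches the stated bound, but it organizes the key step differently from the paper. Both proofs start from the same Chebyshev reduction, $\| {\bf 1}(|x|\geq Dt)\,e^{-itp^2}\phi\|^2 \leq (Dt)^{-2}\langle e^{-itp^2}\phi, x^2 e^{-itp^2}\phi\rangle$ with $\phi = e^{iS(p)}\chi(p/K_0)\psi$, and both ultimately need the same two inputs: $\||p|\phi\| \lesssim R^{-1}$ and a first-moment bound on $x$. The paper, however, treats $W(t) = \|x\,e^{-itp^2}e^{iS_t(p)}\chi(p/K_0)\psi\|$ as a function of time, computes $\frac{\rd}{\rd t}W^2(t)$ via the commutator of $x^2$ with the effective momentum-space Hamiltonian $p^2 + Z|2(t-T)p - 2\lambda T G(t/T)|^{-1}$, bounds the Dollard contribution pointwise in time by $Z/(t\lambda^2) \leq R^{-1}$, and closes with Gronwall; you instead freeze $t$, conjugate $x$ through the free evolution to get $x+2tp$, and bound $\|x\phi\|$ directly by the product rule on the Fourier side, with the accumulated phase controlled through $\|\nabla_p S\|_{L^\infty(|p|\leq K_0)} \lesssim Z\lambda^{-2}\log(t/T)$. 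Your $\log$-bound is just the time-integral of the paper's pointwise commutator estimate, so the two are quantitatively equivalent; your version is somewhat more elementary (no differential inequality) and makes the smallness of the Dollard correction, of order $(DR)^{-2}(R\lambda)^{-2}$ after using $Z\leq\lambda$, $T\simeq R^2$ and $\sup_{t\geq T}\log(t/T)/t \leq (eT)^{-1}$, explicit rather than absorbed. One point you should state rather than leave implicit: your term (ii) requires $K_0 R \gtrsim 1$ so that $K_0^{-1}\leq CR$; the paper needs the same fact to get $W(T)\leq C(TR^{-1}+R)$, and both proofs inherit it from the eventual choice $K_0R\simeq(R\lambda)^{2/35}$ announced in the statement of Lemma \ref{lm:free}, so this is a shared (and justified) assumption rather than a gap.
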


\begin{proof}
We notice that 
\begin{equation}\label{eq:free0}
\left\| {\bf 1} (|x| \geq D t) \, e^{-i t p^2} e^{i Z  \int_0^{t-T} \frac{\rd \tau}{|2\tau p -2\lambda T G(1+\tau/T)|} } \chi (p / K_0) \psi \right\|^2 \leq \frac{1}{(D t)^2} W^2(t) \end{equation}
where \[\begin{split}  W^2 (t) := \langle  e^{-i t p^2} &e^{i Z  \int_0^{t-T} \frac{\rd \tau}{|2\tau p -2\lambda T G(1+\tau/T)|} } \chi (p / K_0) \psi, \\ & x^2 \, e^{-i t p^2} e^{i Z  \int_0^{t-T} \frac{\rd \tau}{|2\tau p -2\lambda T G(1+\tau/T)|} } \chi (p / K_0) \psi \rangle \,.\end{split}\]
Next we compute
\[ \begin{split}
\frac{\rd}{\rd t} W^2 (t) = \; &
\langle e^{-i t p^2} e^{i Z  \int_0^{t-T} \frac{\rd \tau}{|2\tau p -2\lambda T G(1+\tau/T)|} } \chi (p / K_0) \psi, \\ &\hspace{3cm} i \, \left[ p^2 +  \frac{Z}{|2(t-T) p -2 \lambda T G(t/T)|},  x^2 \right] \\ &\hspace{5cm} \times e^{-i t p^2} e^{i Z  \int_0^{t-T} \frac{\rd \tau}{|2\tau p -2\lambda T G(1+\tau/T)|} } \chi (p / K_0) \psi \rangle 
\\  = \; &2 \text{Im} \, 
\Big\langle e^{-i t p^2} e^{i Z  \int_0^{t-T} \frac{\rd \tau}{|2\tau p -2\lambda T G(1+\tau/T)|} } \chi (p / K_0) \psi,  \\ &\hspace{3cm}  x \cdot \left( 2 p + 2 Z (t-T) \frac{2(t-T) p - 2 \lambda T G(t/T)}{|2(t-T)p - 2\lambda T G(t/T)|^3} \right)  \\ &\hspace{5cm} \times  e^{-i t p^2} e^{i Z  \int_0^{t-T} \frac{\rd \tau}{|\tau p -2\lambda T G(1+\tau/2T)|} } \chi (p / K_0) \psi \Big\rangle
\end{split}\]
which, using $|2(t-T) p - 2\lambda T G(t/T)| \geq (C \lambda - K_0) t$, implies that
\begin{equation}\label{eq:free1} \begin{split}
\Big| \frac{\rd}{\rd t} W^2(t) \Big| \leq \; & C W(t) \, \left( \frac{Z}{t \lambda^2} + \| |p|   \psi \| \right) \\ \leq \; & C R^{-1} W(t)  \end{split} \end{equation}
for all $t >T$ (because $\lambda R \geq 1$, and $Z/\lambda \leq 1$, $\lambda T/R \geq 1$). By Gronwall's Lemma we find that
\begin{equation}\label{eq:free2} W(t) \leq C (t-T) R^{-1} + W(T) \end{equation}
where 
\[ W^2 (T) = \langle e^{-i T p^2} \chi (p / K_0) \psi, x^2 \, e^{-i T p^2}  \chi (p / K_0) \psi \rangle \,. \]
Similarly to (\ref{eq:free1}), we find that \[ 
\left| \frac{\rd}{\rd T} W^2(T) \right| \leq 2 W(T) \, \| |p| \psi \| \leq CR^{-1} W(T) \]
which implies that \[ W(T) \leq CTR^{-1} + W(0) \leq C \left( T R^{-1} + R \right) \]
and thus, combining the last equation with (\ref{eq:free0}) and (\ref{eq:free2}), we obtain (\ref{eq:lmfree}).
\end{proof}

\bigskip

\begin{lemma}\label{lm:h}
Let 
\[ h_\lambda (s,x) = \int \rd k \, e^{ik\cdot y} e^{i Z  \int_0^s \frac{\rd \tau}{|2\tau k - 2 \lambda T G(1+\tau/T)|}} \wh{\psi_T} (k) \] 
with $\wh{\psi}_T (k) = e^{-iTk^2} \chi (k/K_0) \psi$, with $\chi \in C^{\infty}_0 (\bR^3)$ such that $\chi (y)= 1$ for $|y| \leq 1/2$ and $\chi (y) = 0$ for $|y| \geq 1$. Assume that $R^{-1} +R T^{-1} \leq K_0 \leq C \lambda$ for an appropriate constant $C$ (at the end, we will choose $K_0 R \simeq (R \lambda)^{2/35}$, and therefore these conditions are satisfied for large enough $\lambda R$).  Assume also $Z \leq \lambda$. Then, for every $\beta \in \bN^3$, we have 
\[ \begin{split} \left| D^{\beta}_x h_\lambda (s,x) \right| \leq  \, &\frac{ C R^{-3/2}  K_0^{|\beta|}}{1+(x/R)^{2n}} \,\left(\frac{K_0 T}{R}\right)^{2n} \left[ 1 + \frac{Z}{\lambda} \log^{2n} (1+ s/T) \right] \,. \end{split} \]
\end{lemma}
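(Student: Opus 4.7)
The plan is to combine two complementary estimates on $h_\lambda$: a direct pointwise bound, which produces the prefactor $CR^{-3/2}K_0^{|\beta|}$, and a polynomial $x$-decay obtained by integrating $2n$ times by parts in the momentum variable, which supplies the factor $(K_0T/R)^{2n}$ together with the $(Z/\lambda)\log^{2n}$ correction. Since $\chi(\cdot/K_0)$ has support in $|k|\leq K_0$ and both exponentials $e^{iZF(s,\cdot)}$ and $e^{-iTk^2}$ have modulus one, differentiation under the integral sign yields at once
\[ |D^\beta_x h_\lambda(s,x)| \leq K_0^{|\beta|}\int|\wh\psi(k)|\,\rd k \leq C R^{-3/2} K_0^{|\beta|}, \]
where $\|\wh\psi\|_{L^1}\leq CR^{-3/2}$ follows from assumption (\ref{eq:psi2}) with $\gamma=4$.

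To extract the $x$-decay I would use $|x|^{2n}e^{ik\cdot x}=(-\Delta_k)^n e^{ik\cdot x}$ and integrate by parts (boundary terms vanish because of the $k$-cutoff), reducing matters to bounding
\[ \bigl\|(-\Delta_k)^n\bigl[(ik)^\beta e^{iZF(s,k)}e^{-iTk^2}\chi(k/K_0)\wh\psi(k)\bigr]\bigr\|_{L^1_k},\quad F(s,k)=\int_0^s\tfrac{\rd\tau}{|2\tau k-2\lambda TG(1+\tau/T)|}. \]
After a Leibniz expansion, each summand is controlled through the following pointwise estimates, valid on $|k|\leq K_0$: $|D^\alpha k^\beta|\leq CK_0^{|\beta|-|\alpha|}$; $|D^\alpha e^{-iTk^2}|\leq C(TK_0)^{|\alpha|}$ (using $TK_0^2\geq 1$, a consequence of $K_0\geq R^{-1}$ and $T\simeq R^2$); $|D^\alpha\chi(\cdot/K_0)|\leq CK_0^{-|\alpha|}$; and $\int|D^\alpha\wh\psi|\,\rd k\leq CR^{-3/2+|\alpha|}$, from $D^\alpha\wh\psi=\widehat{(ix)^\alpha\psi}$ combined with the exponential decay in (\ref{eq:psi1}) (in explicit form for the Coulomb ground state). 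For the phase, the lower bound $|2\tau k-2\lambda TG(1+\tau/T)|\geq C\lambda(T+\tau)$, valid on $|k|\leq K_0$ by (\ref{eq:ass2}) and $K_0\leq C\lambda/2$, yields $|D^\alpha_k F(s,k)|\leq C\lambda^{-|\alpha|-1}\log(1+s/T)$, and Faà di Bruno together with $Z\leq\lambda$ then produces $|D^\alpha e^{iZF}|\leq C_\alpha(Z/\lambda^{|\alpha|+1})(1+\log^{|\alpha|}(1+s/T))$ for $|\alpha|\geq 1$.

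The dominant summand in the Leibniz expansion places all $2n$ derivatives on $e^{-iTk^2}$ and contributes $K_0^{|\beta|}(TK_0)^{2n}\int|\wh\psi|\,\rd k\leq CR^{-3/2}K_0^{|\beta|}(TK_0)^{2n}$, which is precisely $R^{2n}$ times the claimed bound. Every other distribution of derivatives is dominated: moving one derivative from $e^{-iTk^2}$ onto $\chi(\cdot/K_0)$ costs a harmless factor $1/(TK_0^2)\leq C$; onto $\wh\psi$ costs $R/(TK_0)\leq 1$ (using $TK_0\geq R$, from $K_0\geq R/T$ and $T\simeq R^2$); onto $e^{iZF}$ costs at most $(Z/\lambda)((1+\log(1+s/T))/(\lambda TK_0))$ per derivative. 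Summing over the $O(n)$ distributions in which $\alpha_4\geq 1$ derivatives act on the phase produces at worst the advertised $(Z/\lambda)\log^{2n}(1+s/T)$ correction, since $\lambda TK_0\geq 1$ (from $\lambda R\geq 1$ and $TK_0\geq R$). Dividing the resulting $|x|^{2n}|D^\beta_x h_\lambda|$ bound by $R^{2n}$ to display $(K_0T/R)^{2n}$, and combining with the pointwise bound via $|D^\beta_x h_\lambda|\cdot(1+(|x|/R)^{2n})\leq |D^\beta_x h_\lambda|+(|x|/R)^{2n}|D^\beta_x h_\lambda|$, yields the stated estimate.

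The main obstacle is the combinatorial bookkeeping of the Leibniz expansion: for each way of distributing the $2n$ Laplacian derivatives among the five factors $k^\beta$, $e^{-iTk^2}$, $\chi(k/K_0)$, $\wh\psi$, and $e^{iZF}$, one must verify that the product of pointwise estimates is controlled by the master bound $CR^{-3/2}K_0^{|\beta|}(TK_0)^{2n}[1+(Z/\lambda)\log^{2n}(1+s/T)]$, which requires the inequalities $K_0R\geq 1$, $T\simeq R^2$, $TK_0\geq R$, $K_0\leq C\lambda$, and $Z\leq\lambda$ to conspire. A secondary delicate point is to show that the Faà di Bruno sum $\sum_{j=1}^{|\alpha|}(Z\log/\lambda)^j/\lambda^{|\alpha|}$ for $D^\alpha e^{iZF}$ collapses, thanks to $Z\leq\lambda$, into the clean factor $(Z/\lambda)(1+\log^{|\alpha|}(1+s/T))/\lambda^{|\alpha|}$ used above, rather than exhibiting the geometric blow-up $(Z\log/\lambda)^{|\alpha|}/\lambda^{|\alpha|}$ that a naive Faà di Bruno bound would suggest.
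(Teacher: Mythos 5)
Your proposal follows essentially the same route as the paper's proof: a direct pointwise bound $|D^\beta_x h_\lambda|\leq CR^{-3/2}K_0^{|\beta|}$ from the momentum cutoff and the $L^1$ bound on $\wh\psi$, combined with $2n$ integrations by parts in $k$, a Leibniz expansion (the paper groups $e^{-iTk^2}\chi(k/K_0)\wh\psi$ into the single factor $\wh\psi_T$, which is bounded exactly as you bound the three factors separately), the phase-derivative estimate $|D^{\alpha}e^{iZ\int\cdots}|\leq C(Z/\lambda^{|\alpha|+1})(1+\log^{|\alpha|}(1+s/T))$ obtained via the same $Z\leq\lambda$ collapse of the Fa\`a di Bruno sum, and the same use of $K_0R\geq 1$, $K_0T/R\geq 1$, $K_0\leq C\lambda$ to identify $(K_0T)^{2n}$ as the dominant contribution. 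The argument is correct and matches the paper's.
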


\begin{proof}
We have
\begin{equation}\label{eq:hlam1}  D^{\beta}_x h_\lambda (s,x) = \int \rd k \, (ik)^{\beta} \, e^{ik\cdot x} e^{iZ \int_0^s \frac{\rd
\tau}{|2\tau k - 2 \lambda T G(1+\tau/T)|}} \wh{\psi}_T (k) \,. \end{equation}
Hence
\begin{equation}\label{eq:h1} \begin{split} | D_x^\beta h_\lambda (s,x)| \leq \; & \int \rd k \, |k|^{\beta}  \chi (k / K_0) |\wh{\psi} (k)| \\ \leq \; &
R^{-3/2-|\beta|} \int \rd k \, |k|^{\beta}  \chi (k / R K_0) \frac{1}{(1+ k^2)^2} \\ \leq \; &C R^{-3/2-|\beta|} \int_{|k| \leq 2 R K_0} \frac{1}{(1+|k|)^{4-|\beta|}} \\ \leq \; &C R^{-3/2} K_0^{|\beta|} \,. \end{split} \end{equation}

Integrating by parts in (\ref{eq:hlam1}), we arrive at
\[ \begin{split}
 D^{\beta}_x h_\lambda (s,x) = & \; \int \rd k \, (ik)^{\beta} \, \frac{\Delta_k^n \, e^{ik\cdot x}}{(-1)^n |x|^{2n}} \, e^{iZ \int_0^s \frac{\rd
\tau}{|2\tau k - 2\lambda T G(1+\tau/T)|}} \wh{\psi}_T (k) \\  = & \; 
\int \rd k \, \frac{e^{ik\cdot x}}{(-1)^n |x|^{2n}} \, \Delta^n \left[ (ik)^{\beta} \, e^{iZ \int_0^s \frac{\rd \tau}{|2\tau k - 2\lambda T G(1+\tau/T)|}} \wh{\psi}_T (k) \right]
\end{split} \]
and therefore
\begin{equation}\label{eq:dbh}  \begin{split}
| D^{\beta}_x h_\lambda (s,x)| \leq & \; \frac{C}{|x|^{2n}} \, \sum_{|\alpha_1| + | \alpha_2|+|\alpha_3| = 2n} 
\int \rd k \,|k|^{|\beta|-|\alpha_1|} \, |D^{\alpha_2} \, e^{iZ \int_0^s \frac{\rd \tau}{|2\tau k - 2\lambda T G(1+\tau/T)|}}| \, |D^{\alpha_3} \, \wh{\psi}_T (k)| \,. 
\end{split}\end{equation}
Observe that, for all $|\alpha_2| \geq 1$, 
\[  |D^{\alpha_2} \, e^{iZ \int_0^s \frac{\rd \tau}{|2\tau k - 2\lambda T G(1+\tau/T)|}}| \leq C \sum_{m=1}^{|\alpha_2|} \sum_{j_1,..,j_m \geq 1: j_1+..+j_m=|\alpha_2|} \prod_{i=1}^m Z \int_0^s \frac{\rd\tau \, \tau^{j_i}}{|2\tau k - 2\lambda T G(1+\tau/T)|^{j_i+1}}\,. \]
Using the fact that $|k| \leq K_0$ on the support of $\wh{\psi}_T$, we find $|2\tau k - 2\lambda T G(1+\tau/T)| \geq 2 \lambda T |G(1+\tau/T)| -2 K_0 \tau \geq C\lambda T + (C\lambda- K_0) \tau$. Therefore, assuming that $K_0 < C \lambda /2$, and that $Z <  \lambda$, 
\begin{equation}\label{eq:a2} \begin{split} |D^{\alpha_2} \, e^{iZ \int_0^s \frac{\rd \tau}{|2\tau k - 2\lambda T G(1+\tau/T)|}}| \leq \; & C \sum_{m=1}^{|\alpha_2|} \sum_{j_1,..,j_m \geq 1: j_1+..+j_m=|\alpha_2|} \prod_{i=1}^m \frac{Z}{\lambda^{j_1+1}} \int_0^s \frac{\rd\tau}{T+\tau} \\ \leq \;&  \frac{C}{\lambda^{|\alpha_2|}} \sum_{m=1}^{|\alpha_2|} \left(\frac{Z \log (1+s/t)}{\lambda}\right)^{m} \\ \leq \; &
C \frac{Z}{\lambda^{|\alpha_2|+1}} (1+\log^{|\alpha_2|} (1+s/T)) \,. \end{split} \end{equation}

On the other hand, by a simple computation, we have
\begin{equation}\label{eq:a3} \begin{split}  | D^{\alpha_3} \wh{\psi}_T (k)| \leq \; &C \chi (k /K_0) \, \frac{R^{3/2+|\alpha_3|}}{(1+(Rk)^2)^2} \left( \frac{T^{1/2}}{R} (1+|k| T^{1/2}) + \frac{1}{RK_0} +\frac{1}{\sqrt{1+(Rk)^2}} \right)^{|\alpha_3|} \\ \leq \; &C\chi (k/K_0) \, \frac{R^{3/2}}{(1+(Rk)^2)^2} (K_0 T)^{|\alpha_3|} \end{split} \end{equation}
assuming that $K_0 R \geq 1$ and $K_0 T/R \geq 1$.

{F}rom (\ref{eq:dbh}), it follows that
\[ \begin{split} 
|x|^{2n} \, | D^{\beta}_x h_\lambda (s,x)| \leq \; &C \sum_{|\alpha_1|+|\alpha_3|=2n} \int_{|k| \leq K_0}  \rd k \, |k|^{|\beta|-|\alpha_1|} (K_0 T)^{|\alpha_3|} \frac{R^{3/2}}{(1+(Rk)^2)^2} 
\\ &+ C\sum_{|\alpha_1|+|\alpha_2|+|\alpha_3| = 2n, |\alpha_2| \geq 1} \frac{Z}{\lambda^{|\alpha_2|+1}} (1+ \log^{|\alpha_2|} (1+s/T)) \\ &\hspace{1cm} \times \int_{|k| \leq K_0}  \rd k \, |k|^{|\beta|-|\alpha_1|} (K_0 T)^{|\alpha_3|} \frac{R^{3/2}}{(1+(Rk)^2)^2} 
\\ \leq \; & CR^{-3/2-|\beta| +|\alpha_1|} \sum_{|\alpha_1|+|\alpha_3|=2n} (K_0 T)^{|\alpha_3|} \int_{|k| \leq R K_0}   \frac{\rd k}{(1+|k|)^{4-|\beta| +|\alpha_1|}} 
\\ &+ C\frac{ZR^{-3/2 -|\beta| +|\alpha_1|} }{\lambda} (1+\log^{2n} (1+s/T)) 
\\ &
\hspace{1cm}\times \sum_{|\alpha_1|+|\alpha_2|+|\alpha_3| = 2n, |\alpha_2| \geq 1} \frac{1}{\lambda^{|\alpha_2|}}  (K_0 T)^{|\alpha_3|}
\int_{|k| \leq R K_0}   \frac{\rd k}{(1+|k|)^{4-|\beta| +|\alpha_1|}} 
\end{split}\]
and therefore
\[ \begin{split}
|x|^{2n} \, | D^{\beta}_x h_\lambda (s,x)| \leq \; & CR^{-3/2+2n -|\beta| } \sum_{|\alpha_1|+|\alpha_3|=2n} \left(\frac{K_0 T}{R} \right)^{|\alpha_3|} (1+ (K_0 R)^{-1+|\beta| -|\alpha_1|+\eps})  \\ &+C \frac{Z R^{-3/2+ 2n-|\beta|}}{\lambda} (1+\log^{2n} (1+s/T)) 
\\ &
\hspace{1cm}\times
\sum_{|\alpha_1|+|\alpha_2|+|\alpha_3| = 2n, |\alpha_2| \geq 1} \frac{1}{(R\lambda)^{|\alpha_2|}}  \left( \frac{K_0 T}{R } \right)^{|\alpha_3|}(1+ (K_0 R)^{-1+|\beta| -|\alpha_1|+\eps}) 
 \\ \leq \; &C R^{-3/2+2n} K_0^{|\beta|}  \left(\frac{K_0 T}{R}\right)^{2n} \left[ 1 + \frac{Z}{\lambda} \log^{2n} (1+ s/T) \right] \,.
 \end{split} \]
Combining this bound with (\ref{eq:h1}), we find, for arbitrary $n\geq 0$,
\[ |D^{\beta}_x h_\lambda (s,x)| \leq  \, \frac{ C R^{-3/2}  K_0^{|\beta|}}{1+(x/R)^{2n}} \,\left(\frac{K_0 T}{R}\right)^{2n} \left[ 1 + \frac{Z}{\lambda} \log^{2n} (1+ s/T) \right] \,. \]
\end{proof} 

\begin{lemma}\label{lm:g}
Let 
\[ g_\lambda (s,x) = \int \rd k \, e^{i k\cdot x}\frac{e^{i Z \int_0^s \frac{\rd \tau}{|2\tau k - 2\lambda T G(1+\tau/T)|}}}{|2sk - 2\lambda T G(1+s/T)|} \,  \wh{\psi}_T (k) \]
with $\wh{\psi}_T (k) = e^{-iTk^2} \chi (k/K_0) \psi$, with $\chi \in C^{\infty}_0 (\bR^3)$ such that $\chi (y)= 1$ for $|y| \leq 1/2$ and $\chi (y) = 0$ for $|y| \geq 1$. Assume that $R^{-1} +R T^{-1} \leq K_0 \leq C \lambda$ for an appropriate constant $C$ (at the end, we will choose $K_0 R \simeq (R \lambda)^{2/35}$, and therefore these conditions are satisfied for large enough $\lambda R$).  Assume also $Z \leq \lambda$. Then, for every $\beta \in \bN^3$, we have 
\[\left| D^{\beta}_x g_\lambda (s,x) \right| \leq \frac{C}{\lambda (T+s)}
\frac{R^{-3/2}  K_0^{|\beta|}}{1+(x/R)^{2n}} \,\left(\frac{K_0 T}{R}\right)^{2n} \left[ 1 + \frac{Z}{\lambda} \log^{2n} (1+ s/T) \right] \,. \]   
\end{lemma}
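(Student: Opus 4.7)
The proof will closely parallel the proof of Lemma \ref{lm:h}; the only genuinely new ingredient is that the integrand carries the extra factor $|2sk - 2\lambda T G(1+s/T)|^{-1}$, and I will show that this factor is responsible precisely for the gain of $1/(\lambda(T+s))$ compared with the bound on $h_\lambda$.

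First I would establish the crude bound (without the decay in $x$). For $k$ in the support of $\wh{\psi}_T$ we have $|k|\leq K_0$, and using assumption (\ref{eq:ass2}) together with $K_0 \leq C\lambda$, I would check that
\[ |2sk - 2\lambda T G(1+s/T)| \geq 2\lambda T |G(1+s/T)| - 2 K_0 s \geq C\lambda T + (C\lambda - K_0) s \geq C\lambda (T+s). \]
Plugging this into (an analogue of) the $L^1$ estimate (\ref{eq:h1}) and differentiating the phase $e^{ik\cdot x}$ to bring down the factor $(ik)^\beta$ gives immediately
\[ |D^\beta_x g_\lambda(s,x)| \leq \frac{C}{\lambda(T+s)} \int \rd k\,|k|^{|\beta|}\,|\wh{\psi}_T(k)| \leq \frac{C\,R^{-3/2} K_0^{|\beta|}}{\lambda(T+s)}. \]

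Next, to obtain the polynomial decay in $x/R$, I would integrate by parts $n$ times in $k$, writing $e^{ik\cdot x} = (-|x|^{-2})^n \Delta_k^n e^{ik\cdot x}$, and then distribute the $2n$ $k$-derivatives over the four factors: the polynomial $(ik)^\beta$, the Dollard phase $e^{iZ\int_0^s \rd\tau/|2\tau k - 2\lambda T G(1+\tau/T)|}$, the new factor $|2sk - 2\lambda T G(1+s/T)|^{-1}$, and the cutoff wave function $\wh{\psi}_T(k)$. The derivatives on the first, second and fourth factors are controlled exactly as in Lemma \ref{lm:h}: the bounds (\ref{eq:a2}) and (\ref{eq:a3}) produce the factor $\bigl[1+(Z/\lambda)\log^{2n}(1+s/T)\bigr]$ and eat powers of $K_0 T/R$ respectively. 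For the new factor, each $k$-derivative of $|2sk - 2\lambda T G(1+s/T)|^{-1}$ picks up a factor of $s$ and an extra denominator, yielding the bound
\[ \bigl|D^{\alpha_4}_k |2sk - 2\lambda T G(1+s/T)|^{-1}\bigr| \leq \frac{C s^{|\alpha_4|}}{(\lambda(T+s))^{|\alpha_4|+1}} \leq \frac{C}{\lambda(T+s)}\,\frac{1}{\lambda^{|\alpha_4|}}, \]
so the prefactor $1/(\lambda(T+s))$ is preserved and every additional derivative costs only $1/\lambda$, which is harmless when combined with the $K_0 R$ factors from the $k$-integration (using $K_0\leq C\lambda$).

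Collecting the estimates and using Leibniz as in the derivation of the bound following (\ref{eq:dbh}), I would obtain
\[ |x|^{2n}\,|D^\beta_x g_\lambda(s,x)| \leq \frac{C\,R^{-3/2+2n} K_0^{|\beta|}}{\lambda(T+s)}\left(\frac{K_0 T}{R}\right)^{2n}\!\left[1+\frac{Z}{\lambda}\log^{2n}(1+s/T)\right], \]
and interpolating this with the crude bound via $1/(1+(x/R)^{2n}) \leq \min(1, (R/|x|)^{2n})$ yields the statement. The only real bookkeeping hurdle is verifying that the $Z\leq \lambda$, $K_0 R \geq 1$, $K_0 T/R \geq 1$ hypotheses (already used in Lemma \ref{lm:h}) remain sufficient to absorb all the combinatorial factors from Leibniz; this is straightforward since they were enough in the $h_\lambda$-case and the new factor only improves matters through the uniform lower bound on $|2sk - 2\lambda T G(1+s/T)|$.
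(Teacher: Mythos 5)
Your proposal is correct and follows essentially the same route as the paper's own proof: the crude bound via $|2sk-2\lambda T G(1+s/T)|\geq C\lambda(T+s)$ on the support of $\chi(k/K_0)$, then $n$-fold integration by parts with the $2n$ derivatives distributed by Leibniz over the four factors, reusing the estimates (\ref{eq:a2}) and (\ref{eq:a3}) from Lemma \ref{lm:h} and noting that each derivative hitting the new factor costs $s/(\lambda(T+s))\leq 1/\lambda$ while preserving the overall prefactor $1/(\lambda(T+s))$. Nothing further is needed.
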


\begin{proof}
We have
\begin{equation}\label{eq:glam1}  D^{\beta}_x g_\lambda (s,x) = \int \rd k \, (ik)^{\beta} \, e^{ik\cdot x} \frac{e^{iZ \int_0^s \frac{\rd
\tau}{|2\tau k -2 \lambda T G(1+\tau/T)|}}}{|2sk - 2\lambda T G(1+s/T)|} \, \wh{\psi}_T (k) \end{equation}
Since, for $|k| \leq K_0$, we have $|sk - 2\lambda T G(1+s/2T)| \geq C \lambda (T+s)- sK_0 \geq C\lambda T + s (C\lambda - K_0) \leq C \lambda (T+s)$, we find 
\begin{equation}\label{eq:g1} \begin{split} | D_x^\beta g_\lambda (s,x)| \leq \; & \frac{C}{\lambda (T+s)} \int \rd k \, |k|^{\beta}  \chi (k / K_0) |\wh{\psi} (k)| \\ \leq \; &
 \frac{C R^{-3/2 - |\beta|}}{\lambda (T+s)}  \int \rd k \, |k|^{\beta}  \chi (k / R K_0) \frac{1}{(1+ k^2)^2} \\ \leq \; &\frac{C R^{-3/2} K_0^{|\beta|}}{\lambda (T+s)}\, . \end{split} \end{equation}
Integrating by parts in (\ref{eq:glam1}), we arrive at
\[ \begin{split}
 D^{\beta}_x g_\lambda (s,x) = & \; \int \rd k \, (ik)^{\beta} \, \frac{\Delta_k^n \, e^{ik\cdot x}}{(-1)^n |x|^{2n}} \, \frac{e^{iZ \int_0^s \frac{\rd
\tau}{|2\tau k - 2\lambda T G(1+\tau/T)|}}}{|2sk -2 \lambda T G(1+s/T)|} \,  \wh{\psi}_T (k) \\  = & \; 
\int \rd k \, \frac{e^{ik\cdot x}}{(-1)^n |x|^{2n}} \, \Delta^n \left[ (ik)^{\beta} \, \frac{e^{iZ \int_0^s \frac{\rd \tau}{|2\tau k -2 \lambda T G(1+\tau/T)|}}}{|2sk-2\lambda T G(1+s/T)|} \wh{\psi}_T (k) \right]
\end{split} \]
and therefore
\begin{equation}  \begin{split}
| D^{\beta}_x h_\lambda (s,x)| \leq & \; \frac{C}{|x|^{2n}} \, \sum_{|\alpha_1| + \dots +|\alpha_4| = 2n} 
\int \rd k \,|k|^{|\beta|-|\alpha_1|} \, | D^{\alpha_2} \, e^{iZ \int_0^s \frac{\rd \tau}{|2\tau k - 2\lambda T G(1+\tau/T)|}}| \\ &\hspace{1cm} \times  \frac{s^{|\alpha_3|}}{|2sk - 2\lambda T G(1+s/T)|^{|\alpha_3|+1}} \, |D^{\alpha_3} \, \wh{\psi}_T (k)| \,.
\end{split}\end{equation}
{F}rom (\ref{eq:a2}), (\ref{eq:a3}), and since $|sk-\lambda T G(1+s/T)| \geq C \lambda (T+s)$, we find
\[ \begin{split} | D^{\beta}_x h_\lambda (s,x)| \leq & \;  \frac{C R^{3/2}}{|x|^{2n}} \, \sum_{|\alpha_1| + \dots +|\alpha_4| = 2n, |\alpha_2|=0} \frac{(K_0T)^{|\alpha_4|}}{\lambda^{|\alpha_3|+1} (T+s)} \int_{|k| \leq K_0}  \rd k \,\frac{|k|^{|\beta|-|\alpha_1|}}{(1+(Rk)^2)^2}  \\ &+ \frac{C R^{3/2}}{|x|^{2n}} \frac{Z}{\lambda} (1+\log^{2n} (1+s/T))  \, \sum_{|\alpha_1| + \dots +|\alpha_4| = 2n, |\alpha_2| \geq 1} \frac{(K_0T)^{|\alpha_4|}}{\lambda^{|\alpha_2|+|\alpha_3|+1} (T+s)} \\ &\hspace{1cm} \times \int_{|k| \leq K_0} \rd k \,\frac{|k|^{|\beta|-|\alpha_1|}}{(1+(Rk)^2)^2} \\
\leq & \; \frac{C R^{-3/2}  K_0^{|\beta|}}{\lambda (T+s) (|x|/R)^{2n}}  \left[ 1 + \frac{Z}{\lambda} \log^{2n} (1+s/T) \right] \, \left(\frac{K_0 T}{R} \right)^{2n}
\end{split}\]
where we used $K_0 T/ R \geq 1$, $R\lambda > 1$, $Z < \lambda$. Combining the last equation with (\ref{eq:g1}), we conclude the proof of the lemma.
\end{proof} 

\thebibliography{hhh}

\bibitem{1}
P. Eckle, A.N. Pfeiffer, C. Cirelli, A. Staudte, R. D{\"{o}}rner, H.G. Muller, M. B{\"{u}}ttiker, U. Keller.
\newblock Attosecond Ionization and Tunneling Delay Time Measurements in Helium.
\newblock  {\em Science}, {\bf 322}, 1525-1529 (2008).

\bibitem{2}
P. Eckle, M. Smolarski, Ph. Schlup, J. Biegert, A. Staudte, M. Sch{\"{o}}ffer, H.G. Muller, R. D{\"{o}}rner,  U. Keller.
\newblock Attosecond Angular Streaking.
\newblock  {\em Nature (physics)}, {\bf 4},  565-570 (2008).

\bibitem{3}
V.~Bach, F. Klopp, H. Zenk.
\newblock  Mathematical Analysis of the Photoelectric Effect.
\newblock  {\it Adv. Theor. Math. Phys.}, {\bf 5}, no. 6, 969-999 (2001).

\bibitem{4}
Supporting material for \cite{1} can be found at \\ http://www.sciencemay.org/cgi/content/full/322/5907/1525.

\bibitem{5}
L.V. Keldysh. Ionization in the field of a strong electromagnetic wave.
\newblock {\em Sov. Phys. JETP}, {\bf 20}, 1307 (1965).

\bibitem{6}
M. B{\"{u}}ttiker, R. Landauer. Traversal time for tunneling. 
\newblock {\em Phys. Rev. Lett.}, {\bf 49}, 1739 (1982).

\bibitem{7}
M. Reed,  B. Simon.
\newblock {\em Methods of Modern Mathematical Physics}. Vol. 1, p. 297.
\newblock  New York and London. Academic Press 1972.

\bibitem{8}
M. Reed,  B. Simon.
\newblock {\em Methods of Modern Mathematical Physics}. Vol. 3,  Theorem X.70.
\newblock New York and London: Academic Press 1972.

\bibitem{FKS} A. Fring, V. Kostrykin, and R. Schrader. Ionization probabilities through ultra-intense fields in the extreme limit. {\it J. Phys. A}, {\bf 30} (24), 8599-8610 (1997). 

\bibitem{10}
J. Dollard.
\newblock Asymptotic Convergence and the Coulomb Interaction.
\newblock {\em J. Math. Phys.}, {\bf 5},  729 (1964).

\end{document}